\newcommand{\PROOF}{\begin{proof}}
\newcommand{\QED}{\end{proof}}
\newcommand{\ceil}[1]{ \left\lceil #1 \right\rceil }
\newcommand{\myset}[2]{ \left\{ #1 \;\left|\; #2 \right. \right\} }
\newcommand{\mysetl}[2]{ \left\{\left. #1 \right| #2 \right\} }
\newcommand{\prefix}{\sqsubseteq}
\newcommand{\limn}{\lim\limits_{n\to\infty}}
\newcommand{\liminfn}{\liminf\limits_{n\to\infty}}
\newcommand{\limsupn}{\limsup\limits_{n\to\infty}}
\newcommand{\N}{\mathbb{N}}
\newcommand{\Z}{\mathbb{Z}}
\newcommand{\Q}{\mathbb{Q}}
\newcommand{\R}{\mathbb{R}}
\newcommand{\FS}{{\mathrm{FS}}}
\newcommand{\constr}{{\mathrm{constr}}}
\newcommand{\dimfs}{\mathrm{dim}_\mathrm{FS}}
\newcommand{\cdim}{\mathrm{cdim}}
\renewcommand{\dim}{{\mathrm{dim}}}
\newcommand{\Dim}{{\mathrm{Dim}}}
\newcommand{\cDim}{{\mathrm{cDim}}}
\newcommand{\Dimfs}{{\Dim_\FS}}
\newcommand{\str}{{\mathrm{str}}}
\newcommand{\regSS}{S^\infty}
\newcommand{\strSS}{S^\infty_{\mathrm{str}}}
\newcommand{\G}{{\mathcal{G}}}
\newcommand{\D}{{\mathcal{D}}}
\newcommand{\FREQ}{{\mathrm{FREQ}}}
\newcommand{\freq}{{\mathrm{freq}}}
\newcommand{\strings}{\{0,1\}^*}
\newcommand{\K}{{\mathrm{K}}}
\newcommand{\CH}{\mathcal{H}}
\newtheorem{theorem}{Theorem}[section]
\newtheorem{corollary}[theorem]{Corollary}
\newtheorem{lemma}[theorem]{Lemma}
\newtheorem{proposition}[theorem]{Proposition}
\newtheorem{observation}[theorem]{Observation}
\theoremstyle{definition}
\newtheorem*{definition}{Definition}
\newtheorem{example}[theorem]{Example}
\newtheorem*{example*}{Example}
\newtheorem*{examples*}{Examples}
\newtheorem*{notation}{Notation}
\theoremstyle{remark}
\numberwithin{equation}{section}
\numberwithin{figure}{section}
\renewcommand{\include}{\input}
\newcommand{\I}{\mathcal{I}}
\begin{document}

\title{ {\bf
A Divergence Formula for Randomness and Dimension}
}
\author{
Jack H. Lutz\footnote{This research was supported in part by National Science Foundation
 Grants 9988483, 0344187, 0652569, and 0728806 and by the Spanish
 Ministry of Education and Science (MEC) and the European Regional
 Development Fund (ERDF) under project TIN2005-08832-C03-02.}\\
Department of Computer Science\\
Iowa State University\\
Ames, IA 50011, USA\\
lutz@cs.iastate.edu
}

\date{}

\maketitle

\begin{abstract}

If $S$ is an infinite sequence over a finite alphabet $\Sigma$ and $\beta$
is a probability measure on $\Sigma$, then the {\it dimension} of $ S$ with
respect to $\beta$, written $\dim^\beta(S)$, is a constructive version of
Billingsley dimension that coincides with the (constructive Hausdorff)
dimension $\dim(S)$ when $\beta$ is the uniform probability measure.  This
paper shows that $\dim^\beta(S)$ and its dual $\Dim^\beta(S)$, the {\it strong
dimension} of $S$ with respect to $\beta$, can be used in conjunction with
randomness to measure the similarity of two probability measures
$\alpha$ and $\beta$ on $\Sigma$.  Specifically, we prove that the
{\it divergence formula}
\[
 \dim^\beta(R) = \Dim^\beta(R) =\frac{\CH(\alpha)}{\CH(\alpha) + \D(\alpha || \beta)}
\]
holds whenever $\alpha$ and $\beta$ are computable, positive probability
measures on $\Sigma$ and $R \in \Sigma^\infty$ is random with respect to
$\alpha$.  In this formula, $\CH(\alpha)$ is the Shannon entropy of $\alpha$,
and $\D(\alpha||\beta)$ is the Kullback-Leibler divergence between $\alpha$
and $\beta$.  We also show that the above formula holds for all
sequences $R$ that are $\alpha$-normal (in the sense of Borel) when
$\dim^\beta(R)$ and $\Dim^\beta(R)$ are replaced by the more effective
finite-state dimensions $\dimfs^\beta(R)$ and $\Dimfs^\beta(R)$.  In the
course of proving this, we also prove finite-state compression
characterizations of $\dimfs^\beta(S)$ and $\Dimfs^\beta(S)$.

\end{abstract}

\section{Introduction}\label{se:1}
The constructive dimension $\dim(S)$ and the constructive strong
dimension $\Dim(S)$ of an infinite sequence $S$ over a finite alphabet
$\Sigma$ are constructive versions of the two most important classical
fractal dimensions, namely, Hausdorff dimension \cite{Haus19} and packing
dimension \cite{Tricot82,Sull84}, respectively.  These two constructive
dimensions, which were introduced in \cite{Lutz:DISS,Athreya:ESDAICC}, have been
shown to have the useful characterizations
\begin{equation}\label{eq:1_1}
       \dim(S) = \liminf_{w\rightarrow S}\frac{\K(w)}{|w|\log |\Sigma|}
\end{equation}
and
\begin{equation}\label{eq:1_2}
       \Dim(S) = \limsup_{w\rightarrow S}\frac{\K(w)}{|w|\log |\Sigma|},
\end{equation}
where the logarithm is base-$2$ \cite{Mayordomo:KCCCHD,Athreya:ESDAICC}.  In these equations,
$\K(w)$ is the Kolmogorov complexity of the prefix $w$ of $S$, i.e., the
{\it length in bits of the shortest program} that prints the string
w.  (See section 2.6 or \cite{LiVi97} for details.)  The numerators in
these equations are thus the {\it algorithmic information content} of
w, while the denominators are the ``naive'' information content of $w$,
also in bits.  We thus understand \eqref{eq:1_1} and \eqref{eq:1_2} to say that $\dim(S)$
and $\Dim(S)$ are the lower and upper {\it information densities} of the
sequence $S$.  These constructive dimensions and their analogs at other
levels of effectivity have been investigated extensively in recent
years \cite{EFD-bib}.

The constructive dimensions $\dim(S)$ and $\Dim(S)$ have recently been
generalized to incorporate a probability measure $\nu$ on the sequence
space $\Sigma^\infty$ as a parameter \cite{Lutz:DPSSF}.  Specifically, for each
such $\nu$ and each sequence $S \in \Sigma^\infty$, we now have the
constructive dimension $\dim^\nu(S)$ and the constructive strong
dimension $\Dim^\nu(S)$ of $S$ with respect to $\nu$.  (The first of these
is a constructive version of Billingsley dimension \cite{Bill60}.)  When
$\nu$ is the uniform probability measure on $\Sigma^\infty$, we have
$\dim^\nu(S) = \dim(S)$ and $\Dim^\nu(S) = \Dim(S)$.  A more interesting
example occurs when $\nu$ is the product measure generated by a
nonuniform probability measure $\beta$ on the alphabet $\Sigma$.  In this
case, $\dim^\nu(S)$ and $\Dim^\nu(S)$, which we write as $\dim^\beta(S)$ and
$\Dim^\beta(S)$, are again the lower and upper information densities of
S, but these densities are now measured with respect to unequal letter
costs.  Specifically, it was shown in \cite{Lutz:DPSSF} that
\begin{equation}\label{eq:1_3}
       \dim^\beta(S) = \liminf_{w\rightarrow S}\frac{\K(w)}{\I_\beta(w)}
\end{equation}
and
\begin{equation}\label{eq:1_4}
       \Dim^\beta(S) = \limsup_{w\rightarrow S}\frac{\K(w)}{\I_\beta(w)},
\end{equation}
where
\[ \I_\beta(w) = \sum_{i=0}^{|w|-1} \log\frac{1}{\beta(w[i])}\]
is the Shannon self-information of $w$ with respect to $\beta$.
These unequal letter costs $\log(1/\beta(a))$ for $a \in \Sigma$
can in fact be useful.  For example, the complete analysis of the
dimensions of individual points in self-similar fractals given by
\cite{Lutz:DPSSF} requires these constructive dimensions with a particular
choice of the probability measure $\beta$ on $\Sigma$.

In this paper we show how to use the constructive dimensions
$\dim^\beta(S)$ and $\Dim^\beta(S)$ in conjunction with randomness to
measure the degree to which two probability measures on $\Sigma$ are
similar.  To see why this might be possible, we note that the
inequalities
\[         0 \leq \dim^\beta(S) \leq \Dim^\beta(S) \leq 1\]
hold for all $\beta$ and $S$ and that the maximum values
\begin{equation}\label{eq:1_5}
             \dim^\beta(R) = \Dim^\beta(R) = 1
\end{equation}
are achieved whenever the sequence $R$ is random with respect to $\beta$.
It is thus reasonable to hope that, if $R$ is random with respect to
some other probability measure $\alpha$ on $\Sigma$, then $\dim^\beta(R)$ and
$\Dim^\beta(R)$ will take on values whose closeness to $1$ reflects the
degree to which $\alpha$ is similar to $\beta$.

This is indeed the case.  Our first main theorem says that the
{\it divergence formula}
\begin{equation}\label{eq:1_6}
 \dim^\beta(R) = \Dim^\beta(R) =\frac{\CH(\alpha)}{\CH(\alpha) + \D(\alpha||\beta)}
\end{equation}
holds whenever $\alpha$ and $\beta$ are computable, positive probability
measures on $\Sigma$ and $R \in \Sigma^\infty$ is random with respect to
$\alpha$.  In this formula, $\CH(\alpha)$ is the Shannon entropy of $\alpha$,
and $\D(\alpha||\beta)$ is the Kullback-Leibler divergence between $\alpha$
and $\beta$.  When $\alpha = \beta$, the Kullback-Leibler divergence
$\D(\alpha||\beta)$ is $0$, so \eqref{eq:1_6} coincides with \eqref{eq:1_5}.  When $\alpha$ and
$\beta$ are dissimilar, the Kullback-Leibler divergence $\D(\alpha||\beta)$
is large, so the right-hand side of \eqref{eq:1_6} is small.  Hence the
divergence formula tells us that, when $R$ is $\alpha$-random,
$\dim^\beta(R) = \Dim^\beta(R)$ is a quantity in $[0,1]$ whose closeness to
$1$ is an indicator of the similarity between $\alpha$ and $\beta$.

The proof of \eqref{eq:1_6}  serves as an outline of our other, more challenging
task, which is to prove that the divergence formula  \eqref{eq:1_6} also
holds for the much more effective {\it finite-state} $\beta$-{\it dimension}
$\dimfs^\beta(R)$ and {\it finite-state strong} $\beta$-{\it dimension}
$\Dimfs^\beta(R)$. (These dimensions,  defined in section \ref{sse:2_5}, are generalizations
of finite-state dimension and finite-state strong dimension, which were
introduced in \cite{Dai:FSD,Athreya:ESDAICC}, respectively.)

With this objective in mind, our second main theorem characterizes the
finite-state $\beta$-dimensions in terms of finite-state data compression.
Specifically, this theorem says that, in analogy with  \eqref{eq:1_3} and  \eqref{eq:1_4},
the identities
\begin{equation}\label{eq:1_7}
\dimfs^\beta(S) = \inf_C \liminf_{w\rightarrow S}\frac{|C(w)|}{\I_\beta(w)}
\end{equation}
and
\begin{equation}\label{eq:1_8}
       \dimfs^\beta(S) = \inf_C \limsup_{w\rightarrow S}\frac{|C(w)|}{\I_\beta(w)}
\end{equation}
hold for all infinite sequences $S$ over $\Sigma$.  The infima here are taken
over all information-lossless finite-state compressors (a model introduced
by Shannon \cite{Shannon48} and investigated extensively ever since) $C$ with output
alphabet ${0,1}$, and $|C(w)|$ denotes the number of bits that $C$ outputs when
processing the prefix $w$ of $S$.  The special cases of \eqref{eq:1_7} and \eqref{eq:1_8} in
which $\beta$ is the uniform probability measure on $\Sigma$, and hence
$\I_\beta(w) = |w| \log |\Sigma|$, were proven in \cite{Dai:FSD,Athreya:ESDAICC}.  In fact,
our proof uses these special cases as ``black boxes'' from which we derive
the more general \eqref{eq:1_7} and \eqref{eq:1_8}.

With \eqref{eq:1_7} and \eqref{eq:1_8} in hand, we prove our third main theorem.  This involves
the finite-state version of randomness, which was introduced by Borel \cite{Bore09}
long before finite-state automata were defined.  If $\alpha$ is a probability
measure on $\Sigma$, then a sequence $S \in \Sigma^\infty$ is $\alpha$-{\it normal}
in the sense of Borel if every finite string $w \in \Sigma^*$ appears with
asymptotic frequency $\alpha(w) \in S$, where we write
\[            \alpha(w) = \prod_{i=0}^{|w|-1} \alpha(w[i]).\]
(See section 2.6 for a precise definition of asymptotic frequency.)
Our third main theorem says that the {\it divergence formula}
\begin{equation}\label{eq:1_9}
\dimfs^\beta(R) = \Dimfs^\beta(R) =\frac{\CH(\alpha)}{\CH(\alpha) + \D(\alpha||\beta)}
\end{equation}
holds whenever $\alpha$ and $\beta$ are  positive probability
measures on $\Sigma$ and $R \in \Sigma^\infty$ is $\alpha$-normal.

In section \ref{se:2} we briefly review ideas from Shannon information
theory, classical fractal dimensions, algorithmic information theory,
and effective fractal dimensions that are used in this paper.  Section
\ref{se:3} outlines the proofs of \eqref{eq:1_6}, section \ref{se:4} outlines the
proofs of \eqref{eq:1_7} and \eqref{eq:1_8}, and section \ref{se:5} outlines the
proof of \eqref{eq:1_9}. Various proofs are consigned to a
technical appendix. 
\section{Preliminaries}\label{se:2}

\subsection{Notation and setting}\label{sse:2_1}

Throughout this paper we work in
a finite alphabet $\Sigma=\{ 0,1,\dots, k-1\}$, where
$k\geq 2$. We write $\Sigma^*$ for the set of
(finite) {\em strings} over $\Sigma$ and $\Sigma^\infty$ for
the set of (infinite) {\em sequences} over $\Sigma$.
We write $|w|$ for the length of a
string $w$ and $\lambda$ for the empty string.
For $w\in\Sigma^*$ and $0\leq i <|w|$, $w[i]$ is the
$i$th symbol in $w$. Similarly, for $S\in\Sigma^\infty$ and
$i\in\N$ ($=\{ 0,1,2,\dots\}$), $S[i]$ is the $i$th
symbol in $S$. Note that the leftmost symbol in a
string or sequence is the $0$th symbol.

A {\em prefix} of a string or sequence $x\in\Sigma^*\cup\Sigma^\infty$
is a string $w\in\Sigma^*$ for which there exists a
string or sequence $y\in\Sigma^*\cup\Sigma^\infty$ such that $x=wy$.
In this case we write $w\prefix x$. The equation
$\lim_{w\rightarrow S}f(w) =L$ means that, for all $\epsilon>0$, for
all sufficiently long prefixes $w\prefix S$, $|f(w)-L|<\epsilon$.
We also use the limit inferior,
\[\liminf_{w\rightarrow S} f(w)=\lim_{w\rightarrow S} \inf\myset{f(x)}{w\prefix x\prefix S},\]
and the limit superior
\[\limsup_{w\rightarrow S}f(w) =\lim_{w\rightarrow S}\sup\myset{f(x)}{w\prefix x\prefix S}.\]

\subsection{Probability measures, gales, and Shannon information}\label{sse:2_2}

A {\em probability measure} on $\Sigma$ is a function
$\alpha:\Sigma\rightarrow [0,1]$ such that $\sum_{a\in \Sigma }\alpha(a)=1$. A
probability measure $\alpha$ on $\Sigma$ is {\em positive} if
$\alpha(a)>0$ for every $\alpha\in\Sigma$. A probability
measure $\alpha$ on $\Sigma$ is {\em rational} if $\alpha(a) \in\Q$
(i.e., $\alpha(a)$ is a rational number) for every $a\in\Sigma$.

A {\em probability measure} on $\Sigma^\infty$ is a function
$\nu:\Sigma^*\rightarrow[0,1]$ such that $\nu(\lambda)=1$ and,
for all $w\in\Sigma^*$, $\nu(w) =\sum_{a\in\Sigma} \nu(wa)$. (Intuitively,
$\nu(w)$ is the probability that $w\prefix S$ when
the sequence $S\in\Sigma^\infty$ is ``chosen according to $\nu$.'')
Each probability measure $\alpha$ on $\Sigma$ naturally
induces the probability measure $\alpha$ on $\Sigma^\infty$
defined by
\begin{equation}\label{eq:2_1}
\alpha(w)=\prod_{i=0}^{|w|-1} \alpha(w[i])
\end{equation}
for all $w\in\Sigma^*$.

We reserve the symbol $\mu$ for the
{\em uniform probability measure} on $\Sigma$, i.e.,
\[\mu(a)=\frac{1}{k}\text{ for all $a\in\Sigma$},\]
and also for the {\em uniform probability measure}
on $\Sigma^\infty$, i.e.,
\[\mu(w) =k^{-|w|}\text{ for all $w\in\Sigma^*$}.\]

If $\alpha$ is a probability measure on $\Sigma$
and $s\in [0,\infty)$, then an $s$-$\alpha$-{\em gale} is a
function $d:\Sigma^*\rightarrow [0,\infty)$ satisfying
\begin{equation}\label{eq:2_2}
d(w)=\sum_{a\in\Sigma} d(wa)\alpha(a)^s
\end{equation}
for all $w\in\Sigma^*$. A $1$-$\alpha$-gale is also
called an $\alpha$-{\em martingale}. When $\alpha=\mu$, we
omit it from this terminology, so an
$s$-$\mu$-gale is called an $s$-{\em gale}, and a
$\mu$-martingale is called a {\em martingale}.

We frequently use the following simple fact
without explicit citation.

\begin{observation}\label{ob:2_1}
Let $\alpha$ and $\beta$ be positive
probability measures on $\Sigma$, and let $s,t\in[0,\infty)$.
If $d:\Sigma^*\rightarrow[0,\infty)$ is an $s$-$\alpha$-gale, then
the function $\tilde{d}:\Sigma^*\rightarrow [0,\infty)$ defined by
\[\tilde d(w)=\frac{\alpha(w)^s}{\beta(w)^t} d(w)\]
is a $t$-$\beta$-gale.
\end{observation}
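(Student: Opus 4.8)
The plan is to verify the defining gale equation \eqref{eq:2_2} for $\tilde d$ with parameters $t$ and $\beta$ directly from the corresponding equation for $d$, using the multiplicativity of the induced product measures. First I would record the elementary identities $\alpha(wa)=\alpha(w)\alpha(a)$ and $\beta(wa)=\beta(w)\beta(a)$, which are immediate from the definition \eqref{eq:2_1} of the induced measures on $\Sigma^\infty$. Positivity of $\alpha$ and $\beta$ guarantees that $\beta(w)^t>0$ for every $w\in\Sigma^*$, so the ratio defining $\tilde d(w)$ is well-defined throughout, and $\tilde d$ indeed maps into $[0,\infty)$.

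Next I would substitute the definition of $\tilde d$ into the right-hand side of the $t$-$\beta$-gale condition. Fixing $w\in\Sigma^*$ and writing, for each $a\in\Sigma$,
\[
\tilde d(wa)=\frac{\alpha(wa)^s}{\beta(wa)^t}\,d(wa)=\frac{\alpha(w)^s\alpha(a)^s}{\beta(w)^t\beta(a)^t}\,d(wa),
\]
I multiply by $\beta(a)^t$ and sum over $a$. The per-symbol factors $\beta(a)^t$ cancel, and the $w$-dependent factor pulls out of the sum, yielding
\[
\sum_{a\in\Sigma}\tilde d(wa)\beta(a)^t = \frac{\alpha(w)^s}{\beta(w)^t}\sum_{a\in\Sigma} d(wa)\alpha(a)^s.
\]
By the hypothesis that $d$ is an $s$-$\alpha$-gale, the inner sum is exactly $d(w)$, so the whole expression equals $\frac{\alpha(w)^s}{\beta(w)^t}\,d(w)=\tilde d(w)$. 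This is precisely \eqref{eq:2_2} for $\tilde d$, $t$, and $\beta$, which completes the verification.

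There is no substantial obstacle here: the argument is a short algebraic manipulation whose only genuine content is the cancellation of the one-symbol factors $\alpha(a)^s$ and $\beta(a)^t$ against the product-measure increments. The hypotheses play only a bookkeeping role—positivity keeps every denominator nonzero and every power well-defined (including the degenerate cases $s=0$ or $t=0$), while the product structure \eqref{eq:2_1} is exactly what lets the one-symbol extension $w\mapsto wa$ factor cleanly so that the sum collapses via the original gale equation.
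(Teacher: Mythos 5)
Your verification is correct and is exactly the argument the paper intends: Observation \ref{ob:2_1} is stated ``without explicit citation'' precisely because its proof is this routine computation---expand $\tilde d(wa)$ using the multiplicativity $\alpha(wa)=\alpha(w)\alpha(a)$, $\beta(wa)=\beta(w)\beta(a)$ from \eqref{eq:2_1}, cancel the $\beta(a)^t$ factors, pull out $\alpha(w)^s/\beta(w)^t$, and invoke the gale equation \eqref{eq:2_2} for $d$. Your bookkeeping remarks (positivity of $\beta$ keeping $\beta(w)^t>0$, and the degenerate exponents $s=0$ or $t=0$) are also handled correctly, so there is nothing to add.
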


Intuitively, an $s$-$\alpha$-gale is a strategy for
betting on the successive symbols in a
sequence $S\in\Sigma^\infty$. For each prefix $w\prefix S$,
$d(w)$ denotes the amount of capital (money)
that the gale $d$ has after betting on the
symbols in $w$. If $s=1$, then the right-hand side of \eqref{eq:2_2} is the conditional
expectation of $d(wa)$, given that $w$ has
occurred, so \eqref{eq:2_2} says that the
payoffs are fair. If $s<1$, then \eqref{eq:2_2}
says that the payoffs are unfair.

Let $d$ be a gale, and let $S\in\Sigma^\infty$. Then
$d$ {\em succeeds} on $S$ if $\limsup_{w\rightarrow S} d(w)=\infty$,
and $d$ {\em succeeds strongly} on $S$ if $\liminf_{w\rightarrow S} d(w)=\infty$.
The {\em success set} of $d$ is the set $\regSS[d]$
of all sequences on which $d$ succeeds,
and the {\em strong success set} of $d$ is the set
$\strSS[d]$ of all sequences on which $d$ succeeds
strongly.

The {\em Shannon entropy } of a probability
measure $\alpha$ on $\Sigma$ is
\[\CH(\alpha) = \sum_{a\in\Sigma}\alpha(a) \log \frac{1}{\alpha(a)},\]
where $0\log \frac{1}{0}=0$. (unless otherwise indicated,
all logarithms in this paper are base-$2$.)
The {\em Kullback-Leibler divergence} between two
probability measures $\alpha$ and $\beta$ on $\Sigma$ is
\[\D(\alpha||\beta) =\sum_{a\in\Sigma} \alpha(a)\log \frac{\alpha(a)}{\beta(a)}.\]
The Kullback-Leibler divergence is used to
quantify how ``far apart'' the two probability
measures $\alpha$ and $\beta$ are. The {\em Shannon
self-information} of a string $w\in\Sigma^*$ with
respect to a probability measure $\beta$ on $\Sigma$ is
\[\I_\beta(w) =\log \frac{1}{\beta(w)} =\sum_{i=0}^{|w|-1}\log \frac{1}{\beta(w[i])}.\]
Discussions of $\CH(\alpha)$, $\D(\alpha||\beta)$, $\I_\beta(w)$ and
their properties may be found in any good
text on information theory, e.g., \cite{CovTho06}.

\subsection{Hausdorff, packing, and Billingsley dimensions}\label{sse:2_3}

Given a probability measure $\beta$ on $\Sigma$, each
set $X\subseteq \Sigma^\infty$ has a {\em Hausdorff  dimension}
$\dim(X)$, a {\em packing dimension} $\Dim(X)$, a
{\em Billingsley dimension} $\dim^\beta(X)$, and a
{\em strong Billingsley dimension} $\Dim^\beta(X)$, all of
which are real numbers in the interval $[0,1]$.
In this paper we are not concerned with
the original definitions of these classical
dimensions, but rather in their recent
characterizations (which may be taken as
definitions) in terms of gales.

\begin{notation}
For each probability measure $\beta$ on $\Sigma$
and each set $X\subseteq \Sigma^\infty$, let $\G^\beta(X)$
(respectively, $\G^{\beta,\str}(X)$) be the set of all
$s\in[0,\infty)$ such that there is a $\beta$-$s$-gale
$d$ satisfying $X\subseteq \regSS[d]$ (respectively, $X\subseteq \strSS[d]$).

\begin{theorem}[gale characterizations of classical fractal dimensions]\label{th:2_2}
Let $\beta$ be a probability measure on $\Sigma$, and let $X\subseteq \Sigma^\infty$.
\begin{enumerate}
\begin{multicols}{2}
\item \cite{Lutz:DCC} $\dim(X)=\inf \G^\mu(X)$.
\item \cite{Athreya:ESDAICC} $\Dim(X)=\inf \G^{\mu,\str}(X)$.

\item \cite{Lutz:DPSSF} $\dim^\beta(X) =\inf \G^\beta(X)$.
\item \cite{Lutz:DPSSF} $\Dim^\beta(X)=\inf\G^{\beta,\str}(X)$.
\end{multicols}
\end{enumerate}
\end{theorem}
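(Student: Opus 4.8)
The plan is to note first that parts 1 and 2 are precisely the special cases $\beta=\mu$ of parts 3 and 4 (since $\mu(w)=k^{-|w|}$ and an $s$-$\mu$-gale is exactly an $s$-gale), so it suffices to establish the two Billingsley characterizations. Recall that the classical $\beta$-Billingsley dimension of $X$ may be defined through the content
\[ m^\beta_s(X)=\lim_{\delta\to 0}\inf\left\{\sum_i \beta(w_i)^s \;\Big|\; X\subseteq\bigcup_i\{z\in\Sigma^\infty \mid w_i\prefix z\},\ \beta(w_i)\le\delta\right\}, \]
with $\dim^\beta(X)=\inf\{s\mid m^\beta_s(X)=0\}$, while $\Dim^\beta$ is defined analogously from the packing (rather than covering) version of this construction. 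I would prove each identity by the standard pair of inequalities relating gales to covers.

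For the direction $\dim^\beta(X)\le\inf\G^\beta(X)$, fix $s\in\G^\beta(X)$ and an $s$-$\beta$-gale $d$ with $X\subseteq\regSS[d]$. The key is the conservation law obtained by setting $D(w)=d(w)\beta(w)^s$: the gale equation \eqref{eq:2_2} becomes $D(w)=\sum_{a\in\Sigma}D(wa)$, so that $\sum_{w\in A}D(w)\le D(\lambda)=d(\lambda)$ for every prefix-free set $A\subseteq\Sigma^*$. For each $M$, let $A_M$ be the prefix-free set of shortest prefixes $w$ with $d(w)>M$; since $d$ succeeds on every point of $X$, the cylinders $\{[w]\mid w\in A_M\}$ cover $X$, and
\[ \sum_{w\in A_M}\beta(w)^s=\sum_{w\in A_M}\frac{D(w)}{d(w)}\le\frac{1}{M}\sum_{w\in A_M}D(w)\le\frac{d(\lambda)}{M}. \]
Because $\beta(w)^s\le d(\lambda)/M$ on $A_M$, these covers also have vanishing mesh as $M\to\infty$, whence $m^\beta_s(X)=0$ and $\dim^\beta(X)\le s$; taking the infimum over $s\in\G^\beta(X)$ gives the inequality.

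For the converse $\inf\G^\beta(X)\le\dim^\beta(X)$, fix $s>s'>\dim^\beta(X)$, so $m^\beta_{s'}(X)=0$. For each $n$ choose a cover $A_n$ of $X$ by cylinders with $\sum_{w\in A_n}\beta(w)^{s'}\le 2^{-n}$, and from each $A_n$ build an $s$-$\beta$-gale $d_n$ that, starting from bounded capital, concentrates its bets so as to place weight proportional to $\beta(w)^{s'}$ on each covering cylinder $[w]$; the surplus exponent $s-s'$ then forces $d_n$ to reach height of order $\beta(w)^{s'-s}\to\infty$ on each covered point. Summing $d=\sum_n d_n$ (finite at $\lambda$ because $\sum_n 2^{-n}<\infty$) yields a single $s$-$\beta$-gale, and since every point of $X$ lies in infinitely many of the $A_n$, $d$ succeeds there, so $s\in\G^\beta(X)$. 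I expect this summed-gale construction to be the main obstacle: one must verify the gale equation \eqref{eq:2_2} for each $d_n$ exactly rather than as an inequality, normalize so that $d(\lambda)<\infty$, and make precise how the gap $s-s'$ converts the covering bound into unbounded capital along $X$.

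Finally, the strong-dimension parts 2 and 4 follow the same scheme with two modifications. Covers are replaced by the packing (all-scales) construction defining $\Dim^\beta$, and $\limsup$-success ($\regSS[d]$) is replaced by $\liminf$-success ($\strSS[d]$), which forces the gale to be large on \emph{all} sufficiently long prefixes simultaneously; this matches the requirement in the packing construction that the content be controlled at every scale at once. The easy direction transfers with the same conservation law applied to the prefix-free sets where $d$ first \emph{and thereafter} stays above $M$, while the converse again reduces to an explicit gale-summation argument of the same flavor as above.
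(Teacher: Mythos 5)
The first thing to say is that the paper contains no proof of Theorem \ref{th:2_2} to compare against: the theorem is imported from the literature (part 1 from \cite{Lutz:DCC}, part 2 from \cite{Athreya:ESDAICC}, parts 3 and 4 from \cite{Lutz:DPSSF}), and section \ref{sse:2_3} says explicitly that the gale characterizations ``may be taken as definitions'' of the classical dimensions. So your proposal has to be judged as a reconstruction of the proofs in the cited works. For parts 1 and 3 (Hausdorff/Billingsley) your sketch is essentially that standard argument and is correct: the conservation identity $D(w)=d(w)\beta(w)^s$ with $D(w)=\sum_{a\in\Sigma}D(wa)$, the resulting Kraft-type inequality over prefix-free sets, first-passage antichains $A_M$ as small-mesh covers in one direction, and cylinder-directed gales with initial capital $\beta(w)^{s'}$ summed over covers of total weight $2^{-n}$ in the other. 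Two caveats you should make explicit: the converse construction divides by $\beta(v)^{s}$, so it requires $\beta$ positive (or a separate convention for null cylinders), while the statement allows arbitrary $\beta$; and you need monotonicity of $\G^\beta(X)$ in $s$ (immediate from Observation \ref{ob:2_1}) so that the infima behave as claimed.

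The genuine gap is in parts 2 and 4. Packing (and strong Billingsley) dimension is defined by a \emph{double} optimization: a pre-measure taken as a supremum over $\delta$-packings, followed by an infimum of the summed pre-measure over countable decompositions $X\subseteq\bigcup_j X_j$. In your ``easy'' direction you cannot apply the conservation law to a conveniently chosen antichain, the way the first-passage sets work in the Hausdorff case: a packing is an \emph{arbitrary} prefix-free family of cylinders meeting the set, and strong success only guarantees that $d$ is large on sufficiently long prefixes, non-uniformly over points. The actual argument first decomposes $X=\bigcup_N X_N$ with $X_N=\{S\in X : d(w)>M$ for every $w\prefix S$ with $|w|\ge N\}$ (this is exactly where $\liminf$-success enters), and then bounds arbitrary packings of each $X_N$ of sufficiently small mesh by the conservation law. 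Without this countable decomposition the inequality does not come out, and your sketch never mentions it.

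Symmetrically, the converse for parts 2 and 4 is not ``of the same flavor'' as the Hausdorff one. The summed cover-gale construction only yields $\limsup$-success: each summand peaks at its target cylinder and then its capital decays along longer prefixes, so the total can dip between scales, and $\liminf$-success fails. To force the capital to be large on \emph{all} sufficiently long prefixes one needs bounded-weight covers at \emph{every} scale simultaneously. These come from the packing pre-measure (for each $n$, the set of length-$n$ prefixes of $X_j$ is itself a packing, so $\sum_w\beta(w)^{s'}$ over it is uniformly bounded), which is in essence the classical identity between packing dimension and modified upper box dimension; one then builds one gale per scale, sums them with geometric weights, and finally takes a weighted countable sum over the decomposition to recover countable stability. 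This box-dimension bridge plus the countable-stability step is the real technical content of \cite{Athreya:ESDAICC} and of its Billingsley generalization in \cite{Lutz:DPSSF}, and it is missing from your proposal.
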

\end{notation}

\subsection{Randomness and constructive dimensions}\label{sse:2_4}

Randomness and constructive dimensions are
defined by imposing computability constraints
on gales.

A real-valued function $f:\Sigma^*\rightarrow \R$ is
{\em computable} if there is a computable,
rational-valued function $\hat f:\Sigma^*\times \N\rightarrow \Q$
such that, for all $w\in\Sigma^*$ and $r\in\N$,
\[|\hat f(w,r) -f(w)|\leq 2^{-r}.\]
A real-valued function $f:\Sigma^*\rightarrow\R$ is
{\em constructive}, or {\em lower semicomputable}, if
there is a computable, rational-valued function
$\hat f:\Sigma^*\times\N\rightarrow\Q$ such that
\begin{enumerate}[(i)]
\item for all $w\in\Sigma^*$ and $t\in\N$, $\hat f(w,t)\leq \hat f(w, t+1)< f(w)$, and
\item
for all $w\in\Sigma^*$, $f(w)=\lim_{t\rightarrow\infty} \hat f(w,t)$.
\end{enumerate}

The first successful definition of the
randomness of individual sequences $S\in\Sigma^\infty$
was formulated by Martin-L\"of \cite{Martinlof66}.
Many characterizations (equivalent definitions)
of randomness are now known, of which
the following is the most pertinent.
\begin{theorem}[Schnorr \cite{Schn71a,Schn77}]
Let $\alpha$ be a probability measure on $\Sigma$. A sequence
$S\in\Sigma^\infty$ is {\em random } with respect to $\alpha$ (or,
briefly, $\alpha$-{\em random}) if there is no
constructive $\alpha$-martingale that succeeds on $S$.
\end{theorem}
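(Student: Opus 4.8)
The plan is to prove the equivalence that Schnorr's characterization asserts, namely that $S\in\Sigma^\infty$ fails to be Martin-L\"of random with respect to $\alpha$ if and only if some constructive $\alpha$-martingale succeeds on $S$. Taking Martin-L\"of's test-based definition as the reference notion, I would prove both implications in contrapositive form, exploiting the tight correspondence between $\alpha$-martingales and Martin-L\"of tests (uniformly c.e.\ sequences $\{U_k\}$ of open subsets of $\Sigma^\infty$ with $\alpha(U_k)\le 2^{-k}$). Throughout, computability of $\alpha$ is what keeps the objects constructed on each side effective.

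For the direction ``martingale success $\Rightarrow$ not random,'' suppose a constructive $\alpha$-martingale $d$ succeeds on $S$; normalizing, assume $d(\lambda)\le 1$. The key analytic input is Ville's (Kolmogorov's) maximal inequality, which gives $\alpha\{S' : \sup_{w\prefix S'} d(w) > 2^k\}\le 2^{-k}$ for every $k$. I would let $U_k$ be the open set generated by the $\prefix$-minimal strings $w$ with $d(w)>2^k$. Since $d$ is lower semicomputable, this defining condition is witnessed by some rational approximation $\hat d(w,t)>2^k$, so $\{U_k\}$ is uniformly c.e.; together with the measure bound it is a Martin-L\"of test. Because $d$ succeeds on $S$, i.e.\ $\limsup_{w\rightarrow S} d(w)=\infty$, for each $k$ some prefix $w\prefix S$ has $d(w)>2^k$, whence $S\in U_k$; thus $S\in\bigcap_k U_k$ and $S$ is not $\alpha$-random.

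For the converse, suppose a Martin-L\"of test $\{U_k\}$ captures $S$, i.e.\ $S\in\bigcap_k U_k$. For each $k$ I would define the $\alpha$-martingale $d_k(w)=\alpha(w)^{-1}\alpha(U_k\cap\myset{S'}{w\prefix S'})$, the conditional $\alpha$-probability of landing in $U_k$ given the prefix $w$; the martingale identity $d_k(w)=\sum_{a\in\Sigma} d_k(wa)\alpha(a)$ follows from $\alpha(w)=\sum_{a}\alpha(wa)$ together with countable additivity, while $d_k(\lambda)=\alpha(U_k)\le 2^{-k}$ and $d_k$ is lower semicomputable because $U_k$ is c.e.\ and $\alpha$ is computable. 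Setting $d=\sum_{k} d_k$ gives a constructive $\alpha$-martingale with $d(\lambda)=\sum_k\alpha(U_k)\le\sum_k 2^{-k}<\infty$. For each $K$, taking a prefix $w\prefix S$ long enough that each $U_k$ with $k\le K$ already contains the cylinder of $w$ forces $d_k(w)=1$ for all such $k$, so $d(w)\ge K$; hence $\limsup_{w\rightarrow S} d(w)=\infty$ and $d$ succeeds on $S$.

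The main obstacle is not the combinatorics but the bookkeeping of effectivity. I expect the delicate points to be: verifying that the series $d=\sum_k d_k$ is genuinely lower semicomputable (one needs the $d_k$ to be uniformly lower semicomputable and the tails $\sum_{k>K} d_k(w)$ uniformly controlled, which via the bound $d_k(w)\le\alpha(w)^{-1}2^{-k}$ is exactly where computability and positivity of $\alpha$ enter), and confirming that an infinite, convergent sum of $\alpha$-martingales is again an $\alpha$-martingale. On the other side, one must check that lower semicomputability of $d$ really makes each level set $U_k$ c.e.\ uniformly in $k$, and that Ville's inequality applies to the possibly only lower-semicomputable $d$. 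These steps are standard but must be carried out carefully to match the precise notion of constructive $\alpha$-martingale fixed in the preliminaries.
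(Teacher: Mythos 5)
Your argument is correct in substance, but there is nothing in the paper to measure it against: the paper does not prove this statement at all. It is quoted as a known theorem of Schnorr (cited to his 1971/1977 work) and is then used as the working \emph{definition} of $\alpha$-randomness; no proof appears in the body or the appendix. What you have written is the classical proof of that cited equivalence, and it is essentially sound: Ville's inequality turns a successful constructive $\alpha$-martingale into a Martin-L\"of $\alpha$-test via the sets $U_k$ generated by strings $w$ with $d(w)>2^k$ (lower semicomputability of $d$ makes these uniformly c.e.), and conversely the conditional measures $d_k(w)=\alpha(U_k\cap[w])/\alpha(w)$ are uniformly lower semicomputable $\alpha$-martingales with $d_k(\lambda)\le 2^{-k}$ whose sum succeeds on every point of $\bigcap_k U_k$. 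A few bookkeeping points deserve care. First, the equivalence requires $\alpha$ to be \emph{computable}, which the paper's statement omits but which you correctly identify as essential; this is harmless for the paper, since its applications (Theorems 3.4 and 3.5) assume computable $\alpha$. Second, your division by $\alpha(w)$ needs $\alpha(w)>0$; for non-positive $\alpha$ one must patch $d_k$ on null cylinders (e.g.\ by setting $d_k(wa)=d_k(w)$ there), a routine fix. Third, your worry about uniformly controlled tails in $\sum_k d_k$ is unnecessary for lower semicomputability: a supremum of an increasing, uniformly lower semicomputable sequence (the partial sums) is lower semicomputable; the tail bound $d_k(w)\le\min\{1,2^{-k}/\alpha(w)\}$ is needed only to see that each $d(w)$ is finite. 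Finally, the normalization $d(\lambda)\le 1$ in the first direction should be done by dividing by a fixed rational upper bound for $d(\lambda)$ (which exists and preserves constructivity), since $d(\lambda)$ itself need not be computable.
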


Motivated by Theorem \ref{th:2_2}, we now define the
constructive dimensions.

\begin{notation}
We define the sets $\G^\beta_\constr(X)$ and
$\G^{\beta,\str}_\constr(X)$ to be like the sets $\G^\beta(X)$ and
$\G^{\beta,\constr}(X)$ of section \ref{sse:2_3}, except that
the $\beta$-$s$-gales are now required to be
constructive.
\end{notation}

\begin{definition}
Let $\beta$ be a probability measure
on $\Sigma$, let $X\subseteq \Sigma^\infty$, and let $S\in\Sigma^\infty$.
\begin{enumerate}
\item \cite{Lutz:DISS} The {\em constructive dimension} of $X$
is $\cdim(X)=\inf \G^\mu_\constr(X)$.
\item \cite{Athreya:ESDAICC}
The {\em constructive strong dimension}
of $X$ is $\cDim(X)=\inf \G^{\mu,\str}_\constr(X)$.
\item \cite{Lutz:DPSSF}
The {\em constructive } $\beta$-{\em dimension}
of $X$ is $\cdim^\beta(X)=\inf \G^\beta_\constr(X)$.
\item \cite{Lutz:DPSSF} The {\em constructive strong} $\beta$-{\em dimension}
of $X$ is $\cDim^\beta(X)=\inf \G^{\beta,\str}_\constr(X)$.
\item
\cite{Lutz:DISS}
The {\em dimension} of $S$ is $\dim(S)=\cdim(\{S\})$.
\item
\cite{Athreya:ESDAICC}
The {\em strong dimension} of $S$ is
$\Dim(S)=\cDim(\{S\})$.
\item \cite{Lutz:DPSSF}
The $\beta$-{\em dimension} of $S$ is
$\dim^\beta(S)=\cdim^\beta(\{S\})$.
\item \cite{Lutz:DPSSF}
The {\em strong } $\beta$-{\em dimension} of $S$
is $\Dim^\beta(S)=\cDim^\beta(\{S\})$.
\end{enumerate}
\end{definition}

It is clear that definitions 1, 2, 5, and 6
above are the special case $\beta=\mu$ of definitions
3, 4, 7, and 8, respectively. It is known that
$\cdim^\beta(X)=\sup_{S\in X}\dim^\beta(S)$ and that $\cDim^\beta(X)
=\sup_{S\in X} \Dim^\beta(S)$ \cite{Lutz:DPSSF}.  Constructive dimensions are
thus investigated in terms of the dimensions
of individual sequences. Since one does not
discuss the classical dimension of an individual
sequence (because the dimensions of section \ref{sse:2_3}
are all zero for singleton, or even countable, sets),
no confusion results from the notation $\dim(S)$,
$\Dim(S)$, $\dim^\beta(S)$, and $\Dim^\beta(S)$.
\subsection{Normality and finite-state dimensions}\label{sse:2_5}

The preceding section developed the constructive
dimensions as effective versions of the classical
dimensions of section \ref{sse:2_3}. We now introduce the
even more effective finite-state dimensions.

\begin{notation}
$\Delta_\Q(\Sigma)$ is the set of all rational-valued probability measure on $\Sigma$.
\end{notation}

\begin{definition}[\cite{SchSti72,Fede91,Dai:FSD}]
A {\em finite-state gambler} ({\em FSG}) is a $4$-tuple
\[G=(Q, \delta, q_0, B),\]
where $Q$ is a finite set of {\em states}, $\delta:Q\times \Sigma\rightarrow Q$
is the {\em transition function}; $q_0\in Q$ is the {\em initial
state}, and $B:Q\rightarrow \Delta_\Q(\Sigma)$ is the {\em betting function}.
\end{definition}

The transition structure $(Q, \delta, q_0)$ here
works as in any deterministic finite-state automaton.
For $w\in\Sigma^*$, we write $\delta(w)$ for the state
reached by starting at $q_0$ and processing $w$
according to $\delta$.

Intuitively, if the above FSG is in state
$q\in Q$, then, for each $a\in \Sigma$, it bets the
fraction $B(q)(a)$ of its current capital
that the next input symbol is an $a$. The
payoffs are determined as follows.

\begin{definition}
Let $G=(Q, \delta, q_0, B)$ be an FSG.
\begin{enumerate}
\item The {\em martingale} of $G$ is the function
$d_G:\Sigma^*\rightarrow [0,\infty)$ defined by the recursion
\[d_G(\lambda)=1,\]
\[d_G(wa)=kd_G(w)B(\delta(w))(a)\]
for all $w\in\Sigma^*$ and $a\in \Sigma$.
\item
If $\beta$ is a probability measure on $\Sigma$
and $s\in[0,\infty)$, then the $s$-$\beta$-{\em gale} of $G$
is the function $d_{G,\beta}^{(s)}:\Sigma^*\rightarrow [0,\infty)$ defined
by
\[d_{G,\beta}^{(s)}(w)=\frac{\mu(w)}{\beta(w)^s}d_G(w)\]
for all $w\in \Sigma^*$.
\end{enumerate}
\end{definition}

It is easy to verify that $d_G = d_{G, \mu}^{(1)}$ is a
martingale. It follows by Observation \ref{ob:2_1}
that $d_{G,\beta}^{(1)}$ is an $s$-$\beta$-gale.

\begin{definition}
A {\em finite-state} $s$-$\beta$-{\em gale} is
an $s$-$\beta$-gale of the form $d_{G,\beta}^{(s)}$ for
some FSG $G$.
\end{definition}

\begin{notation}
We define the sets $\G_\mathrm{FS}^\beta(X)$ and
$\G_\mathrm{FS}^{\beta,\str}(X)$ to be like the sets $\G^\beta(X)$
and $\G^{\beta,\str}(X)$ of section \ref{sse:2_3}, except that
the $s$-$\beta$-gales are now required to
be finite-state.
\end{notation}

\begin{definition}
Let $\beta$ be a probability measure
on $\Sigma$, and let $S\in\Sigma^\infty$.
\begin{enumerate}
\item \cite{Dai:FSD} The {\em finite-state dimension} of $S$
is $\dimfs(S)=\inf \G^\mu_\mathrm{FS}(\{S\})$.
\item \cite{Athreya:ESDAICC} The {\em finite-state strong dimension}
of $S$ is $\Dimfs(S)=\inf \G_\mathrm{FS}^{\mu,\str}(\{S\})$.
\item The {\em finite-state $\beta$-dimension} of $S$
is $\dimfs^\beta(S)=\inf \G_\mathrm{FS}^\beta(\{S\})$.
\item
The {\em finite-state strong $\beta$-dimension} of
$S$ is $\Dimfs^\beta(S) =\inf \G_\mathrm{FS}^{\beta, \str}(\{S\})$.
\end{enumerate}
\end{definition}

We now turn to some ideas based on
asymptotic frequencies of strings in a given
sequence. For nonempty strings $w, x\in\Sigma^*$,
we write
\[\#_\square(w,x)=\left\lvert \mysetl{m\leq \frac{|x|}{|w|}-1}{x[m|w|..(m+1)|w|-1]=w} \right\rvert\]
for the number of block occurrences of $w$ in $x$.
For each sequence $S\in\Sigma^\infty$, each positive integer $n$,
and each nonempty $w\in \Sigma^{<n}$, the $n$th {\em block frequency} of
$w$ in $S$ is
\[\pi_{S,n}(w) =\frac{\#_\square(w, S[0..n|w|-1])}{n}.\]
Note that, for each $n$ and $l$, the restriction $\pi_{S,n}^{(l)}$
of $\pi_{S,n}$ to $\Sigma^l$ is a probability measure on $\Sigma^l$.

\begin{definition}
Let $\alpha$ be a probability measure on $\Sigma$,
let $S\in \Sigma^\infty$, and let $0<l\in \N$.
\begin{enumerate}
\item
$S$ is $\alpha$-$l$-{\em normal} in the sense of
Borel if, for all $w\in \Sigma^l$, $\limn \pi_{S,n}(w)=\alpha(w)$.
\item
$S$ is $\alpha$-{\em normal} in the sense of Borel
if $S$ is $\alpha$-$l$-normal for all $0<l\in\N$.
\item \cite{Bore09} $S$ is {\em normal } in the sense
of Borel if $S$ is $\mu$-normal.
\item
$S$ has {\em asymptotic frequency} $\alpha$, and we
write $S\in \FREQ^\alpha$, if $S$ is $\alpha$-$1$-normal.
\end{enumerate}
\end{definition}

\begin{theorem}[\cite{SchSti72,Bourke:ERFSD}]\label{th:2_4}
For each
probability measure $\alpha$ on $\Sigma$ and each $S\in\Sigma^\infty$,
the following three conditions are equivalent.
\begin{enumerate}[(1)]
\item
$S$ is $\alpha$-normal.
\item
No finite-state $\alpha$-martingale succeeds on $S$.
\item
$\dimfs^\alpha(S)=1$.
\end{enumerate}
\end{theorem}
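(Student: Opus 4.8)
The plan is to prove the three conditions equivalent by establishing the cycle $(1)\Rightarrow(2)\Rightarrow(3)\Rightarrow(1)$, and two preliminary observations will organize everything. First, for every FSG $G$ the $s$-$\alpha$-gale and the $\alpha$-martingale of $G$ are related by $d_{G,\alpha}^{(s)}(w)=\alpha(w)^{1-s}\,d_{G,\alpha}^{(1)}(w)$, which is immediate from $d_{G,\alpha}^{(s)}(w)=\mu(w)\alpha(w)^{-s}d_G(w)$. Second, $\dimfs^\alpha(S)\le 1$ always: the single-state gambler betting $\alpha$ in every state realizes $d_{G,\alpha}^{(s)}(w)=\alpha(w)^{1-s}$, and since $\alpha$ is positive we have $\alpha(w)\le(\max_a\alpha(a))^{|w|}\to 0$, so for every $s>1$ this gale tends to $\infty$ and succeeds; thus $(1,\infty)\subseteq\G_{\mathrm{FS}}^\alpha(\{S\})$. (I assume throughout that $\alpha$ is positive, so that the $s$-$\alpha$-gales are defined.)

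For $(2)\Rightarrow(3)$ I would show that no finite-state $s$-$\alpha$-gale with $s<1$ succeeds on $S$. Assume $(2)$. Then every FS $\alpha$-martingale $d_{G,\alpha}^{(1)}$ has $\limsup_{w\to S}d_{G,\alpha}^{(1)}(w)<\infty$, and being nonnegative it is bounded on the prefixes of $S$, say by $M_G$. For $s<1$ the identity above gives $d_{G,\alpha}^{(s)}(w)\le M_G\,\alpha(w)^{1-s}$, and $\alpha(w)^{1-s}\to 0$ exponentially as $w\to S$, so $d_{G,\alpha}^{(s)}$ does not succeed. Hence no $s<1$ lies in $\G_{\mathrm{FS}}^\alpha(\{S\})$, and with the preliminary remark this forces $\dimfs^\alpha(S)=1$.

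For $(3)\Rightarrow(1)$ I would prove the contrapositive by a block-betting construction. If $S$ is not $\alpha$-normal, fix $l$ and $v\in\Sigma^l$ with $\pi_{S,n}(v)\not\to\alpha(v)$; by compactness pass to a subsequence $(n_j)$ along which the length-$l$ block frequencies $\pi_{S,n_j}^{(l)}$ converge to some $\rho$ on $\Sigma^l$ with $\rho\neq\alpha$, where $\alpha$ now denotes the product measure $b\mapsto\alpha(b)$ on $\Sigma^l$; in particular $\D(\rho\|\alpha)>0$. Build an FSG $G$ that scans $S$ in consecutive length-$l$ blocks (finitely many states track the position within the current block and the symbols read so far) and bets, within each block, to place total mass $\rho'(b)$ on each block $b$, where $\rho'$ is a positive rational measure near $\rho$. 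A direct computation gives $\frac1n\log d_{G,\alpha}^{(s)}(S[0..nl-1])=\sum_{b\in\Sigma^l}\pi_{S,n}(b)\big[\log\rho'(b)-s\log\alpha(b)\big]$, which along $(n_j)$ tends to $s\,\D(\rho\|\alpha)-(1-s)\CH(\rho)$ when $\rho'=\rho$, and to a nearby value for $\rho'$ close to $\rho$. This is positive exactly when $s>\CH(\rho)/(\CH(\rho)+\D(\rho\|\alpha))$, a threshold strictly below $1$ since $\D(\rho\|\alpha)>0$; choosing such an $s<1$ makes $d_{G,\alpha}^{(s)}$ grow exponentially along the prefixes $S[0..n_j l-1]$, so it succeeds and $\dimfs^\alpha(S)\le s<1$, contradicting $(3)$. (It is pleasant that the threshold is precisely the right-hand side of the divergence formula.)

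The remaining implication $(1)\Rightarrow(2)$ is the substantive one, and I expect it to be the main obstacle; it is the generalization to positive $\alpha$ of the theorem of Schnorr and Stimm. Writing $g(q,a)=\log\big(B(q)(a)/\alpha(a)\big)$, the log-capital telescopes to $\log d_{G,\alpha}^{(1)}(w)=\sum_{q,a}N_{q,a}(w)\,g(q,a)$, where $N_{q,a}(w)$ counts the positions of $w$ at which $G$ is in state $q$ and reads $a$; grouping by state and using $\sum_a(N_{q,a}/n_q)\log(B(q)(a)/\alpha(a))\le\D(\hat\alpha_q\|\alpha)$, with $\hat\alpha_q$ the empirical symbol distribution in state $q$, shows that the capital cannot grow exponentially once these conditional distributions approach $\alpha$. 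The hard step is to upgrade this to genuine boundedness: one must use $\alpha$-$l$-normality for all $l$ (not merely $l=1$) to control the state occupancy of an arbitrary deterministic automaton on $S$ and to bound the accumulated fluctuations, so that $\log d_{G,\alpha}^{(1)}(w)$ stays bounded above rather than merely $o(|w|)$ — boundedness is exactly what distinguishes normality from weaker frequency conditions, and a sublinear-but-unbounded martingale is not excluded by growth-rate considerations alone. This is supplied by the Schnorr--Stimm analysis, together with the entropy-rate characterization of \cite{Bourke:ERFSD} on the dimension side, which I would invoke to complete the cycle.
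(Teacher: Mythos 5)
The paper offers no argument for this theorem at all: it cites \cite{SchSti72} for $(1)\Leftrightarrow(2)$ with $\alpha=\mu$, cites \cite{Bourke:ERFSD} for $(2)\Leftrightarrow(3)$ with $\alpha=\mu$, and declares the extension to arbitrary $\alpha$ ``routine.'' Your proposal is therefore a genuinely different route, and it is correct where it gives proofs. The identity $d_{G,\alpha}^{(s)}(w)=\alpha(w)^{1-s}d_{G,\alpha}^{(1)}(w)$, the boundedness of a non-succeeding FS $\alpha$-martingale on prefixes of $S$, and the single-state $\alpha$-bettor giving $(1,\infty)\subseteq\G_{\mathrm{FS}}^\alpha(\{S\})$ together yield $(2)\Rightarrow(3)$ cleanly. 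Your block-betting contrapositive for $(3)\Rightarrow(1)$ is also sound: the growth-rate computation $s\D(\rho||\alpha)-(1-s)\CH(\rho)$ is right, and it is a nice bonus that the success threshold $\CH(\rho)/(\CH(\rho)+\D(\rho||\alpha))$ is exactly the divergence-formula quantity, a fact the paper only recovers implicitly through its Section 5 machinery (Theorem \ref{th:2_5}, Lemma \ref{lm:5_2}, Theorem \ref{th:5_3}). Your standing assumption that $\alpha$ is positive is harmless, since the definition of $d_{G,\alpha}^{(s)}$ divides by $\alpha(w)^s$ and so the theorem implicitly presupposes it.

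Two caveats. First, like the paper, you do not actually prove $(1)\Rightarrow(2)$. You correctly isolate the crux --- the per-state KL estimate only yields $\log d_{G,\alpha}^{(1)}(w)=o(|w|)$, whereas refuting success requires genuine boundedness because success needs only $\limsup=\infty$ --- and you defer to the Schnorr--Stimm dichotomy (exponential decay or bounded capital), generalized from $\mu$ to positive $\alpha$. That deferral is exactly what the paper's ``routine extension'' of \cite{SchSti72} amounts to, so your attempt is no less complete than the paper's treatment, but a self-contained proof would still have to supply this step. Second, your remark that the entropy-rate characterization of \cite{Bourke:ERFSD} would help ``complete the cycle'' is misplaced: with your cycle $(1)\Rightarrow(2)\Rightarrow(3)\Rightarrow(1)$, it is unnecessary once Schnorr--Stimm gives $(1)\Rightarrow(2)$, and it cannot substitute for Schnorr--Stimm, since at best it proves $(1)\Rightarrow(3)$, which together with your two implications still leaves condition $(2)$ implied by nothing.
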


The equivalence of (1) and (2) where $\alpha=\mu$ was
proven in \cite{SchSti72}. The equivalence of (2)
and (3) when $\alpha=\mu$ was noted in \cite{Bourke:ERFSD}.
The extensions of these facts to arbitrary $\alpha$ is
routine.

\newcommand{\mH}{\mathrm{H}}

For each $S\in\Sigma^\infty$ and $0<l\in\N$, the $l$th
{\em normalized lower and upper block entropy}
rates of $S$ are
\[\mH_l^-(S)=\frac{1}{l\log k}\liminfn \CH(\pi_{S,n}^{(l)})\]
and
\[\mH_l^+(S)=\frac{1}{l\log k}\limsupn \CH(\pi_{S,n}^{(l)}),\]
respectively.

We use the following result in section \ref{se:5}.
\begin{theorem}[\cite{Bourke:ERFSD}]\label{th:2_5}
Let $S\in\Sigma^\infty$.
\begin{enumerate}
\item $\dimfs(S)=\inf_{0<l\in\N}\mH_l^-(S)$.
\item
$\Dimfs(S)=\inf_{0<l\in\N}\mH_l^+(S)$.
\end{enumerate}
\end{theorem}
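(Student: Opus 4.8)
The plan is to prove both parts by establishing, for each fixed block length $l$, a two-sided comparison between the success of a finite-state $s$-gale on $S$ and the empirical block entropies $\CH(\pi_{S,n}^{(l)})$, and then optimizing over $l$. I treat part (1) in detail; part (2) is verbatim with the ``infinitely often'' (liminf) notion of success and $\mH_l^-$ replaced by the ``eventually'' (limsup) notion and $\mH_l^+$, since strong dimension differs from dimension only in this liminf/limsup switch. Recall that $k=|\Sigma|$ and that $\dimfs(S)=\inf\G^\mu_{\mathrm{FS}}(\{S\})$.

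For the upper bound $\dimfs(S)\le\inf_l\mH_l^-(S)$, fix $l$ and regard $\Sigma^l$ as a block alphabet. Any positive rational measure $p$ on $\Sigma^l$ is realized by an FSG over $\Sigma$ that remembers its position within the current length-$l$ block together with the symbols already read in it, betting on each symbol the corresponding conditional marginal of $p$; reading $S$ in blocks, the associated $s$-$\mu$-gale $d^{(s)}_p$ satisfies
\[
\frac{1}{n}\log d^{(s)}_{p}\!\left(S[0..nl-1]\right)=s\,l\log k+\sum_{w\in\Sigma^l}\pi_{S,n}^{(l)}(w)\log p(w).
\]
By Gibbs' inequality the sum is at most $-\CH(\pi_{S,n}^{(l)})$, with near-equality when $p\approx\pi_{S,n}^{(l)}$. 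Since the empirical measure drifts with $n$ while an FSG carries only finitely many states, I fix a finite $\epsilon$-net of positive rational measures on $\Sigma^l$ and combine their gales; for $s>\mH_l^-(S)$ the net point nearest a subsequential limit of $\pi_{S,n}^{(l)}$ (along the subsequence realizing $\liminf_n\CH(\pi_{S,n}^{(l)})$) keeps the per-block rate bounded away from $0$ there, so the combination succeeds on $S$. The one delicate point, that a sum of finite-state gales need not itself be finite-state, is handled by passing to the equivalent finite-state compression characterization (the $\beta=\mu$ case of $(1.7)$), where a finite list of candidate block codes is selected by a finite-state device. This gives $\dimfs(S)\le s$; letting $s\downarrow\mH_l^-(S)$ and minimizing over $l$ yields the bound.

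For the lower bound $\dimfs(S)\ge\inf_l\mH_l^-(S)$, let $G$ be any FSG with $m$ states whose $s$-gale succeeds on $S$, fix $l$, and cut $S$ into length-$l$ blocks. Within one block the bets multiply to a product measure $p_q$ on $\Sigma^l$ determined only by the state $q$ in which the block is entered, so a per-block computation gives
\[
\log d^{(s)}_{G,\mu}\!\left(S[0..nl-1]\right)=n\,l\,s\log k+\sum_{q}n_q\sum_{w\in\Sigma^l}\pi_{S,n,q}^{(l)}(w)\log p_q(w),
\]
where $n_q$ counts the blocks entered at state $q$ and $\pi_{S,n,q}^{(l)}$ is their empirical distribution, so that $\pi_{S,n}^{(l)}=\sum_q\tfrac{n_q}{n}\pi_{S,n,q}^{(l)}$. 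Gibbs' inequality bounds the inner sum by $-\CH(\pi_{S,n,q}^{(l)})$, and since the entry state ranges over at most $m$ values its mutual information with the block is at most $\log m$, giving $\sum_q\tfrac{n_q}{n}\CH(\pi_{S,n,q}^{(l)})\ge\CH(\pi_{S,n}^{(l)})-\log m$. Combining,
\[
\log d^{(s)}_{G,\mu}\!\left(S[0..nl-1]\right)\le n\!\left(l\,s\log k-\CH(\pi_{S,n}^{(l)})+\log m\right).
\]
Success forces the right side to be positive along a subsequence, whence $\liminf_n\CH(\pi_{S,n}^{(l)})\le l\,s\log k+\log m$, i.e. $s\ge\mH_l^-(S)-\tfrac{\log m}{l\log k}\ge\inf_{l'}\mH_{l'}^-(S)-\tfrac{\log m}{l\log k}$. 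Letting $l\to\infty$ removes the error term and yields $s\ge\inf_{l'}\mH_{l'}^-(S)$.

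I expect the lower bound to be the main obstacle, specifically the bookkeeping that converts a \emph{state-dependent} FSG into a bound in terms of the state-independent aggregate block distribution $\pi_{S,n}^{(l)}$. The crucial point is that conditioning a length-$l$ block on the $\le m$ possible entry states supplies at most $\log m$ bits, i.e. a per-block entropy deficit of at most $\log m$; normalized this is a rate loss of $\tfrac{\log m}{l\log k}$ that vanishes only as $l\to\infty$ — which is exactly why the characterization is an \emph{infimum} over $l$ rather than a supremum, and why the naive argument (omitting this term) would wrongly suggest $\dimfs(S)\ge\sup_l\mH_l^-(S)$. The closure-under-combination gap in the upper bound is comparatively routine once the finite-state compression characterization is invoked.
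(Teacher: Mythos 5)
The paper does not prove Theorem \ref{th:2_5} at all: it imports it from \cite{Bourke:ERFSD} and uses it only as a black box (in the proof of Theorem \ref{th:5_3}), so there is no in-paper argument to compare yours against, and your proposal must stand on its own. In essentials it does. The lower bound, which you rightly identify as the heart of the matter, is sound and essentially complete: decomposing the log-capital at block boundaries by entry state, applying Gibbs' inequality state by state, and recovering the aggregate entropy via $\sum_q\tfrac{n_q}{n}\CH(\pi^{(l)}_{S,n,q})\ge\CH(\pi^{(l)}_{S,n})-\log m$ (i.e.\ $I(W;Q)\le\CH(Q)\le\log m$), then letting $l\to\infty$ to absorb the $\tfrac{\log m}{l\log k}$ defect, is exactly the mechanism that makes the characterization an infimum over $l$. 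The one elision: success is a limsup over \emph{all} prefixes, not just block boundaries; since one step changes an $s$-gale by a factor of at most $k^s$, hence a block by at most $k^{ls}$, divergence at block boundaries and divergence over all prefixes are equivalent, so this is harmless.

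Two remarks on the upper bound. First, for part (1) your $\epsilon$-net and gale combination are unnecessary: limsup-success needs only one gale to do well along one subsequence, so the single block-betting FSG for one positive rational $p$ close to a subsequential limit $\pi^*$ of the empirical measures (along the subsequence realizing the liminf entropy) already suffices, since your own identity gives log-capital $n\bigl(sl\log k-\CH(\pi^{(l)}_{S,n})-\D(\pi^{(l)}_{S,n}||p)\bigr)\to\infty$ along that subsequence; no combination is formed, so the closure problem never arises there. Second, for part (2) the net genuinely is needed (the empirical measure may wander through the low-entropy region, and no single $p$ keeps the divergence term small at every large $n$), and there the closure problem you flag is real: in this paper's FSG model a sum of finite-state gales need not be finite-state, because the capital-proportional mixing weights are not a function of finitely many states. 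Your fallback --- passing to the compression characterization (Theorem \ref{th:2_7}) and having the compressor select, per super-block, the best of a finite list of block codes --- is the correct and standard mechanism, but as written it is a gesture rather than a proof: one must actually build the two-part ILFSC (a $\lceil\log N\rceil$-bit selector header plus a Shannon code per super-block), amortize the header and rounding overhead by taking super-blocks of $r$ blocks with $r\to\infty$, and use concavity of entropy to bound the average of per-super-block entropies by $\CH(\pi^{(l)}_{S,n})$. With that paragraph supplied, your argument is a correct, self-contained proof of both parts.
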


\subsection{Kolmogorov complexity and finite-state compression}\label{sse:2_6}
 We now review known characterizations of constructive
 and finite-state dimensions that are based on
 data compression ideas.

The {\em Kolmogorov complexity} $\K(w)$ of a string $w\in\Sigma^*$
is the minimum length of a program $\pi\in \strings$ for
which $U(\pi)=w$, where $U$ is a fixed universal
self-delimiting Turing machine \cite{LiVi97}.

\begin{theorem}\label{th:2_6}
Let $\beta$ be a probability measure
on $\Sigma$, and let $S\in\Sigma^\infty$.
\begin{enumerate}
\begin{multicols}{2}
\item \cite{Mayordomo:KCCCHD} $\dim(S) =\liminf_{w\rightarrow S} \frac{\K(w)}{|w|\log k}$.
\item \cite{Athreya:ESDAICC} $\Dim(S) =\limsup_{w\rightarrow S}\frac{\K(w)}{|w|\log k}$.

\item \cite{Lutz:DPSSF} $\dim^\beta(S)=\liminf_{w\rightarrow S} \frac{\K(w)}{\I_\beta(w)}$.
\item \cite{Lutz:DPSSF} $\Dim^\beta(S) =\limsup_{w\rightarrow S} \frac{\K(w)}{\I_\beta(w)}$.
\end{multicols}
\end{enumerate}
\end{theorem}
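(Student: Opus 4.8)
The plan is to prove the $\beta$-weighted characterizations (items 3 and 4) in full, recovering the uniform statements (items 1 and 2) as the special case $\beta=\mu$, for which $\I_\beta(w)=|w|\log k$. The engine is the well-known two-way correspondence between constructive (lower semicomputable) gales and lower semicomputable semimeasures, together with Levin's coding theorem $\K(w)=-\log\mathbf m(w)+O(1)$ for the universal discrete lower semicomputable semimeasure $\mathbf m$ \cite{LiVi97}. I would establish the $\liminf$/$\dim^\beta$ assertion and the $\limsup$/$\Dim^\beta$ assertion in parallel, the only difference being that $\dim^\beta(S)=\inf\G^\beta_\constr(\{S\})$ uses ordinary success ($S\in\regSS[d]$) while $\Dim^\beta(S)$ uses strong success ($S\in\strSS[d]$); correspondingly one reads the $\liminf$ off from the prefixes where the capital is large infinitely often, and the $\limsup$ off from the capital being large at \emph{all} sufficiently long prefixes.

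For the direction $\liminf_{w\rightarrow S}\K(w)/\I_\beta(w)\le\dim^\beta(S)$, I would start from any constructive $s$-$\beta$-gale $d$ with $S\in\regSS[d]$ and set $p(w)=d(w)\beta(w)^s$. The gale equation \eqref{eq:2_2} gives $\sum_{a\in\Sigma}p(wa)=p(w)$, so $p$ conserves mass on each length level; discounting across levels by a factor $6/(\pi^2(|w|+1)^2)$ produces a discrete lower semicomputable semimeasure, and the coding theorem then yields
\[ \K(w)\le s\,\I_\beta(w)-\log d(w)+O(\log|w|). \]
At the infinitely many prefixes of $S$ where $d(w)\ge 1$ (guaranteed by success) the middle term is nonpositive, so dividing by $\I_\beta(w)$ and taking $\liminf$ gives the bound; taking the infimum over admissible $s$ yields $\le\dim^\beta(S)$. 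For the reverse direction I would take the a priori semimeasure $\mathbf M$ on $\Sigma^\infty$ and define $d(w)=\mathbf M(w)\beta(w)^{-s}$, which is readily checked to be a constructive $s$-$\beta$-supergale (cf. Observation \ref{ob:2_1}, reading the semimeasure $\mathbf M$ as a $0$-gale). Since the a priori complexity satisfies $-\log\mathbf M(w)\le\K(w)+O(1)$, whenever $\liminf_{w\rightarrow S}\K(w)/\I_\beta(w)<s$ we get $d(w)\to\infty$ along a subsequence, so $d$ succeeds and hence $\dim^\beta(S)\le s$; letting $s$ decrease to that $\liminf$ completes the equality. The strong-dimension statement is obtained verbatim by replacing ``infinitely many prefixes'' with ``all sufficiently long prefixes'' and $\liminf$ by $\limsup$ throughout.

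The main obstacle I anticipate is purely bookkeeping: controlling the additive $O(\log|w|)$ terms that the coding theorem and the passage between $\K$ and $-\log\mathbf M$ both introduce, and checking that they are annihilated after division by $\I_\beta(w)$. This last step is exactly where positivity of $\beta$ enters: since $\max_{a\in\Sigma}\beta(a)<1$, one has $\I_\beta(w)\ge|w|\log\bigl(1/\max_{a}\beta(a)\bigr)=\Theta(|w|)$, so every logarithmic discrepancy contributes $o(1)$ to the ratio $\K(w)/\I_\beta(w)$. The remaining routine points are the standard passage from constructive $s$-$\beta$-supergales to genuine $s$-$\beta$-gales (which does not change the infima defining $\dim^\beta$ and $\Dim^\beta$) and the verification that each constructed object is indeed lower semicomputable, both of which are inherited unchanged from the uniform case treated in \cite{Mayordomo:KCCCHD,Athreya:ESDAICC}.
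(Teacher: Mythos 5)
The paper never proves Theorem~\ref{th:2_6} at all: it is stated as a known result, with the four items attributed to \cite{Mayordomo:KCCCHD}, \cite{Athreya:ESDAICC}, and \cite{Lutz:DPSSF}, so there is no internal proof to compare yours against. Your argument is, in outline, exactly the standard one from that literature, and it is sound: the direction $\liminf_{w\to S}\K(w)/\I_\beta(w)\le\dim^\beta(S)$ via $p(w)=d(w)\beta(w)^s$ (level-wise mass conservation, polynomial discounting across levels, coding theorem), and the converse via $d(w)=\mathbf{M}(w)\beta(w)^{-s}$ together with $-\log\mathbf{M}(w)\le\K(w)+O(1)$, with the supergale-to-gale passage and the $\liminf$/$\limsup$ split handling the two dimensions in parallel. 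One caveat deserves to be made explicit rather than waved at: the theorem as worded says only ``let $\beta$ be a probability measure,'' but your construction (and the cited results) require $\beta$ to be \emph{computable} as well as positive --- computability is what makes $d(w)\beta(w)^{s}$ and $\mathbf{M}(w)\beta(w)^{-s}$ lower semicomputable, and this is precisely the point hidden in your claim that lower semicomputability is ``inherited unchanged from the uniform case,'' since $\mu$ is trivially computable while a general $\beta$ is not. You do use positivity correctly (to get $\I_\beta(w)=\Theta(|w|)$ and kill the $O(\log|w|)$ error terms); note that both hypotheses are tacitly in force in the only place the paper uses the theorem, namely Lemma~\ref{lm:3_1}, which assumes $\alpha$ and $\beta$ computable and positive.
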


\begin{definition}[\cite{Shannon48}]
\begin{enumerate}
\item
A {\em finite-state compressor} (FSC) is a $4$-tuple
\[C=(Q,\delta, q_0, \nu),\]
where $Q$, $\delta$, and $q_0$ are as in the FSG definition,
and $\nu:Q\times\Sigma\rightarrow \strings$ is the output function.
\item
The {\em output} of an FSC $C=(Q,\delta, q_0,\nu)$
on an input $w\in\Sigma^*$ is the string $C(w)\in \strings$
defined by the recursion
\[C(\lambda) =\lambda,\]
\[C(wa)= C(w)\nu(\delta(w),a)\]
for all $w\in\Sigma^*$ and $a\in\Sigma$.
\item
An {\em information-lossless finite-state compressor} (ILFSC) is an FSC for which
the function
\[\Sigma^*\rightarrow\strings \times Q\]
\[w\mapsto (C(w),\delta(w))\]
is one-to-one.
\end{enumerate}
\end{definition}

\begin{theorem}\label{th:2_7}
Let $S\in\Sigma^\infty$.
\begin{enumerate}
\item \cite{Dai:FSD}
$\dimfs(S)=\inf_C \liminf_{w\rightarrow S} \frac{|C(w)|}{|w|\log k}$.
\item \cite{Athreya:ESDAICC} $\Dimfs(S)=\inf_C \limsup_{w\rightarrow S} \frac{|C(w)|}{|w|\log k}$.
\end{enumerate}
\end{theorem}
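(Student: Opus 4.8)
The two parts are dual: replacing $\liminf$ by $\limsup$ and ordinary success ($\limsup_{w\rightarrow S} d(w)=\infty$) by strong success ($\liminf_{w\rightarrow S} d(w)=\infty$) throughout turns the argument for part~1 into the argument for part~2, so the plan is to prove part~1 carefully and note that part~2 follows verbatim with the limit directions exchanged. The hinge of everything is an exact algebraic identity relating a finite-state $s$-gale to a per-symbol codeword length. Writing $d_G$ for the martingale of $G=(Q,\delta,q_0,B)$, an easy induction gives $d_G(w)=k^{|w|}\prod_i B(\delta(w[0..i-1]))(w[i])$, whence the $s$-$\mu$-gale of $G$ satisfies
\[
\log d_{G,\mu}^{(s)}(w) = s|w|\log k - \sum_{i=0}^{|w|-1}\log\frac{1}{B(\delta(w[0..i-1]))(w[i])}.
\]
Thus $d_{G,\mu}^{(s)}$ succeeds on $S$ exactly when the ``ideal codeword length'' $\sum_i\log(1/B(\cdot)(w[i]))$ falls below $s|w|\log k$ infinitely often, and succeeds strongly exactly when it does so for all sufficiently long prefixes. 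I would establish the two inequalities of part~1 by converting compressors and gamblers back and forth so as to match this ideal length against $|C(w)|$.

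For the inequality $\dimfs(S)\ge\inf_C\liminf_{w\rightarrow S}|C(w)|/(|w|\log k)$, I would take an FSG $G$ whose $s$-gale succeeds on $S$ and build an ILFSC that realizes the ideal lengths above up to vanishing overhead, so that its $\liminf$ compression ratio is at most $s$; taking the infimum over all such $s$ then gives $\inf_C(\cdots)\le\dimfs(S)$. The only subtlety is that the betting fractions $B(q)(a)$, though rational, make $\log(1/B(q)(a))$ irrational in general, so exact per-symbol codes are impossible. The standard device is \emph{block coding}: for each block length $l$, read $l$ input symbols at a time and, in each state, emit a Shannon--Fano codeword of length $\lceil\log(1/p)\rceil$ for the block, where $p$ is the product of the relevant betting fractions along that block. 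Prefix-freeness of the per-state codes together with the tracked state makes the compressor information-lossless, and letting $l\rightarrow\infty$ drives the per-symbol rounding loss $1/l$ to $0$. Matching the limit directions (ordinary success yields short encodings infinitely often, hence small $\liminf$ ratio) closes this half.

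The reverse inequality $\dimfs(S)\le\inf_C(\cdots)$, which I expect to be the main obstacle, goes from an ILFSC $C$ to an FSG $G$ whose $s$-gale succeeds whenever $s$ exceeds the $\liminf$ compression ratio of $C$. The natural construction lets the gambler bet in proportion to $2^{-|\nu(q,a)|}$, treating output lengths as a code, with information-losslessness supplying, via a Kraft-type counting argument, that strings with short encodings are rare enough for this betting to be profitable. The difficulty is that an ILFSC may emit the empty string on some transitions and disambiguate only through its state, so there is no clean per-state Kraft inequality: injectivity of $w\mapsto(C(w),\delta(w))$ yields only a \emph{global} bound (at most $|Q|\,2^{m+1}$ strings compress to $\le m$ bits), not the \emph{local} bound $\sum_a 2^{-|\nu(q,a)|}\le 1$ needed to normalize the betting function without losing capital. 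I would overcome this by first normalizing $C$ to an equivalent ILFSC with nonempty per-transition outputs and bounded delay (a standard but fiddly reduction), after which the per-state Kraft bound holds and the proportional-betting gambler is defined directly, with rational betting function in $\Delta_\Q(\Sigma)$ since the weights are dyadic; alternatively one can route through the block-entropy characterization of Theorem~\ref{th:2_5}. Either way, turning the global information-lossless condition into a usable local betting strategy is the crux, while the remaining bookkeeping---finiteness of $Q$, rationality of the induced $B$, and alignment of the success notions with the $\liminf$ and $\limsup$ ratios for parts~1 and~2---is routine.
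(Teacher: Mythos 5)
Your overall architecture---two directions, gambler-to-compressor via block Shannon--Fano coding and compressor-to-gambler via proportional betting on output lengths---is the same as the proof the paper points at: the paper does not reprove Theorem \ref{th:2_7}, but its appendix records the two halves of the proof from \cite{Dai:FSD} as Lemma \ref{lm:a_1} (ILFSC $\to$ FSG) and Lemmas \ref{lm:a_2}--\ref{lm:a_3} (FSG $\to$ nonvanishing FSG $\to$ ILFSC). You also correctly isolate the crux of the hard direction: information-losslessness is a global injectivity condition and gives no per-state Kraft inequality. The genuine gap is that your proposed resolution of that crux is unsound. First, an ILFSC whose every transition emits a nonempty output satisfies $|C(w)|\geq|w|$, hence $|C(w)|/(|w|\log k)\geq 1/\log k$ for every $w$ (ratio $\geq 1$ when $k=2$); so for a sequence such as $S=000\cdots$, with $\dimfs(S)=0$, no nonempty-output compressor can come close to the unrestricted infimum, and the claimed ``equivalent'' normalization cannot exist precisely in the cases where it is needed. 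Second, even granting nonempty outputs, the per-state Kraft inequality $\sum_{a\in\Sigma}2^{-|\nu(q,a)|}\leq 1$ still fails in general: three transitions out of a state $q$ may all output the single bit $0$ yet preserve information-losslessness because they lead to three \emph{distinct} states whose later outputs disambiguate, giving $\sum_a 2^{-|\nu(q,a)|}=3/2$. Your fallback (routing through Theorem \ref{th:2_5}) is only gestured at, so the hard direction is left with a real hole.

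The repair---and what the cited proof actually does---is to keep the betting at the block level and control the normalization by the \emph{global} counting bound instead of trying to force it to $1$. Fix a block length $l$; for a reachable state $q$ and $u\in\Sigma^l$, let $C_q(u)$ and $\delta_q(u)$ be the output and the state produced by processing $u$ from $q$. Injectivity of $u\mapsto(C_q(u),\delta_q(u))$ gives $Z_q=\sum_{u\in\Sigma^l}2^{-|C_q(u)|}\leq|Q|(1+cl)$, where $c$ bounds the output length per input symbol; the gambler that bets $2^{-|C_q(u)|}/Z_q$ on block $u$ then loses only $\log Z_q=O(\log(l|Q|))$ bits of capital per block relative to the ideal $|w|\log k-|C(w)|$, which is $o(1)$ per input symbol as $l\to\infty$---exactly the $m(|w|/l+l)$ error term in Lemma \ref{lm:a_1}. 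Separately, in your easy direction (FSG $\to$ ILFSC) you never address bets equal to $0$, which the paper's FSGs are allowed to make and which render your Shannon--Fano length $\lceil\log(1/p)\rceil$ undefined on blocks of probability $0$; this is why the cited argument first perturbs $G$ to a nonvanishing FSG at cost $k^{-\delta|w|}$ (Lemma \ref{lm:a_2}). That omission is minor and standard to fix; the per-state Kraft claim is the substantive error.
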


\section{Divergence formula for randomness and constructive dimensions}\label{se:3}

This section proves the divergence formula for
$\alpha$-randomness, constructive $\beta$-dimension, and
constructive strong $\beta$-dimension.
The key point
here is that  the Kolmogorov complexity characterizations
of these $\beta$-dimensions reviewed in section \ref{sse:2_6} immediately
imply the following fact.

\begin{lemma}\label{lm:3_1}
If $\alpha$ and $\beta$ are computable, positive
probability measure on $\Sigma$, then, for all $S\in\Sigma^\infty$,
\[\liminf_{w\rightarrow S} \frac{\I_\alpha(w)}{\I_\beta(w)}\leq \frac{\dim^\beta(S)}{\dim^\alpha(S)}\leq \limsup_{w\rightarrow S} \frac{\I_\alpha(w)}{\I_\beta(w)},\]
and
\[\liminf_{w\rightarrow S} \frac{\I_\alpha(w)}{\I_\beta(w)}\leq \frac{\Dim^\beta(S)}{\Dim^\alpha(S)}\leq \limsup_{w\rightarrow S} \frac{\I_\alpha(w)}{\I_\beta(w)}.\]
\end{lemma}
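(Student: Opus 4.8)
The plan is to derive everything from the Kolmogorov-complexity characterizations in Theorem~\ref{th:2_6}, applied at once to $\alpha$ and to $\beta$; the computability hypotheses on $\alpha$ and $\beta$ are used precisely here, to license that theorem. For a nonempty prefix $w\prefix S$ I would start from the trivial factorization
\[
\frac{\K(w)}{\I_\beta(w)}=\frac{\K(w)}{\I_\alpha(w)}\cdot\frac{\I_\alpha(w)}{\I_\beta(w)},
\]
which makes sense because positivity of $\alpha$ and $\beta$ (with $k\ge 2$) forces $0<\log\frac{1}{\alpha(a)},\log\frac{1}{\beta(a)}<\infty$ for every letter $a$, hence $\I_\alpha(w),\I_\beta(w)>0$ whenever $w\neq\lambda$. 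Writing $x_w=\K(w)/\I_\alpha(w)$ and $y_w=\I_\alpha(w)/\I_\beta(w)$, Theorem~\ref{th:2_6} identifies $\liminf_{w\rightarrow S}x_w=\dim^\alpha(S)$, $\limsup_{w\rightarrow S}x_w=\Dim^\alpha(S)$, $\liminf_{w\rightarrow S}(x_wy_w)=\dim^\beta(S)$, and $\limsup_{w\rightarrow S}(x_wy_w)=\Dim^\beta(S)$.

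The one structural fact I need is that $y_w$ stays in a fixed positive interval: with $a_{\min},a_{\max}$ the least and greatest of the values $\log\frac{1}{\alpha(a)}$ and $b_{\min},b_{\max}$ the analogous values for $\beta$, one has $a_{\min}|w|\le\I_\alpha(w)\le a_{\max}|w|$ and likewise for $\beta$, so $a_{\min}/b_{\max}\le y_w\le a_{\max}/b_{\min}$ for all $w\neq\lambda$. Consequently all four $\liminf$/$\limsup$ above are finite, and---reading $\liminf_{w\rightarrow S}$ and $\limsup_{w\rightarrow S}$ as ordinary $\liminf$/$\limsup$ of the length-indexed sequence of prefixes of $S$---the elementary product inequalities for nonnegative sequences apply. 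Using $\liminf(x_wy_w)\ge(\liminf x_w)(\liminf y_w)$ and $\liminf(x_wy_w)\le(\liminf x_w)(\limsup y_w)$, together with their $\limsup$ analogues $\limsup(x_wy_w)\ge(\limsup x_w)(\liminf y_w)$ and $\limsup(x_wy_w)\le(\limsup x_w)(\limsup y_w)$, I obtain
\[
\dim^\alpha(S)\cdot\rho^-\le\dim^\beta(S)\le\dim^\alpha(S)\cdot\rho^+,\qquad
\Dim^\alpha(S)\cdot\rho^-\le\Dim^\beta(S)\le\Dim^\alpha(S)\cdot\rho^+,
\]
where $\rho^-=\liminf_{w\rightarrow S}\I_\alpha(w)/\I_\beta(w)$ and $\rho^+=\limsup_{w\rightarrow S}\I_\alpha(w)/\I_\beta(w)$.

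Dividing the first chain by $\dim^\alpha(S)$ and the second by $\Dim^\alpha(S)$ yields exactly the two displayed inequalities of the lemma. If $\dim^\alpha(S)=0$ (resp. $\Dim^\alpha(S)=0$), the outer inequalities already force $\dim^\beta(S)=0$ (resp. $\Dim^\beta(S)=0$), so the cross-multiplied form is the correct reading and no division is needed; in every application in this paper $R$ is $\alpha$-random, so $\dim^\alpha(R)=1$ and this case does not arise. The only genuine work is the product-inequality step, which I would discharge with a one-line sub-claim: for bounded nonnegative sequences, $\inf_{m\ge n}(x_my_m)\ge(\inf_{m\ge n}x_m)(\inf_{m\ge n}y_m)$ and $\sup_{m\ge n}(x_my_m)\le(\sup_{m\ge n}x_m)(\sup_{m\ge n}y_m)$, then let $n\to\infty$, while the two mixed inequalities follow by choosing a subsequence realizing one factor's limit and extracting a convergent subsubsequence of the other. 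The main thing to watch is that no product degenerates to $0\cdot\infty$, which the boundedness of $y_w$ away from $0$ and $\infty$ rules out.
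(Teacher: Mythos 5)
Your proof is correct and is exactly the argument the paper intends: the paper gives no separate proof of Lemma~\ref{lm:3_1}, stating only that the Kolmogorov-complexity characterizations of Theorem~\ref{th:2_6} ``immediately imply'' it, and your factorization $\frac{\K(w)}{\I_\beta(w)}=\frac{\K(w)}{\I_\alpha(w)}\cdot\frac{\I_\alpha(w)}{\I_\beta(w)}$ together with the four $\liminf$/$\limsup$ product inequalities is precisely the fleshed-out version of that remark. Your added care about the boundedness of $\I_\alpha(w)/\I_\beta(w)$ away from $0$ and $\infty$, and about the degenerate case $\dim^\alpha(S)=0$ (where the lemma's ratio must be read in cross-multiplied form), addresses details the paper leaves implicit.
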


The following lemma is crucial to our
argument, both here and in section \ref{se:5}.

\begin{lemma}[frequency divergence lemma]\label{lm:3_2}
If $\alpha$ and $\beta$ are positive probability measures on
$\Sigma$, then, for all $S\in\FREQ^\alpha$,
\[\I_\beta(w)=(\CH(\alpha) +\D(\alpha ||\beta))|w| + o(|w|)\]
as $w\rightarrow S$.
\end{lemma}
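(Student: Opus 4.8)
The plan is to reduce $\I_\beta(w)$ to a sum of per-symbol contributions and then invoke the normality hypothesis. For a prefix $w \prefix S$ with $|w| = n$, write $n_a$ for the number of indices $i < n$ with $w[i] = a$; note that $\sum_{a \in \Sigma} n_a = n$. Since $\I_\beta(w) = \sum_{i=0}^{n-1} \log \frac{1}{\beta(w[i])}$ depends only on which symbols occur and not on their order, I would group identical symbols to obtain the exact identity
\[\I_\beta(w) = \sum_{a \in \Sigma} n_a \log \frac{1}{\beta(a)}.\]
Because $\beta$ is positive, each coefficient $\log \frac{1}{\beta(a)}$ is a finite real number, so this is simply a fixed finite linear combination of the letter counts $n_a$.

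First I would normalize by $n$. The key observation is that $n_a/n$ is precisely the first-order block frequency $\pi_{S,n}(a)$ for the prefix $w = S[0..n-1]$. Since $S \in \FREQ^\alpha$ means that $S$ is $\alpha$-$1$-normal, we have $\pi_{S,n}(a) \to \alpha(a)$ as $n \to \infty$, for every $a \in \Sigma$. Dividing the displayed identity by $n$ and letting $w \to S$ therefore gives
\[\lim_{w \to S} \frac{\I_\beta(w)}{|w|} = \sum_{a \in \Sigma} \alpha(a) \log \frac{1}{\beta(a)}.\]

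The remaining step is the algebraic identification of this limit. Splitting the logarithm as $\log \frac{1}{\beta(a)} = \log \frac{1}{\alpha(a)} + \log \frac{\alpha(a)}{\beta(a)}$ and summing against $\alpha(a)$ yields the cross-entropy decomposition
\[\sum_{a \in \Sigma} \alpha(a) \log \frac{1}{\beta(a)} = \CH(\alpha) + \D(\alpha||\beta).\]
Combining the last two displays gives $\lim_{w \to S} \I_\beta(w)/|w| = \CH(\alpha) + \D(\alpha||\beta)$, which is exactly the asserted statement $\I_\beta(w) = (\CH(\alpha) + \D(\alpha||\beta))|w| + o(|w|)$ as $w \to S$.

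There is no serious obstacle here: the lemma is essentially a one-line consequence of the additivity of self-information over symbols (so that $\I_\beta(w)$ is determined by the letter frequencies) together with the convergence of those frequencies to $\alpha$ guaranteed by $1$-normality. The only points requiring care are that positivity of $\beta$ keeps every coefficient $\log(1/\beta(a))$ finite, so that the limit is a genuine real number rather than $+\infty$, and that the convergence $\pi_{S,n}(a) \to \alpha(a)$ for each of the finitely many symbols $a \in \Sigma$ transfers, after division by $|w|$, into the claimed $o(|w|)$ error term.
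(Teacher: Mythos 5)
Your proof is correct and takes essentially the same route as the paper's: group $\I_\beta(w)$ by symbol into $\sum_{a\in\Sigma} n_a \log\frac{1}{\beta(a)}$, use $\alpha$-$1$-normality to replace the letter frequencies by $\alpha(a)+o(1)$, and split $\log\frac{1}{\beta(a)} = \log\frac{1}{\alpha(a)} + \log\frac{\alpha(a)}{\beta(a)}$ to identify the limit as $\CH(\alpha)+\D(\alpha||\beta)$. The only cosmetic difference is that you phrase the middle step as convergence of $\I_\beta(w)/|w|$, whereas the paper carries the $o(1)$ error terms through a single chain of equalities.
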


The next lemma gives a simple relationship
between the constructive $\beta$-dimension and the
constructive dimension of any sequence that is
$\alpha$-$1$-normal.

\begin{lemma}\label{lm:3_3}
If $\alpha$ and $\beta$ are computable,
positive probability measures on $\Sigma$, then,
for all $S\in\FREQ^\alpha$,
\[\dim^\beta(S)=\frac{\dim(S)}{\CH_k(\alpha) +\D_k(\alpha||\beta)},\]
and
\[\Dim^\beta(S)=\frac{\Dim(S)}{\CH_k(\alpha) +\D_k(\alpha||\beta)}.\]
\end{lemma}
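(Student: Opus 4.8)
The plan is to derive this as an essentially arithmetic consequence of the two preceding lemmas. Since $\dim(S)=\dim^\mu(S)$ and $\Dim(S)=\Dim^\mu(S)$ for the uniform measure $\mu$, I would apply Lemma~\ref{lm:3_1} with the measure $\mu$ playing the role of its ``$\alpha$'' and $\beta$ playing the role of its ``$\beta$,'' obtaining
\[
\liminf_{w\to S}\frac{\I_\mu(w)}{\I_\beta(w)}\ \le\ \frac{\dim^\beta(S)}{\dim(S)}\ \le\ \limsup_{w\to S}\frac{\I_\mu(w)}{\I_\beta(w)},
\]
together with the companion chain for $\Dim^\beta(S)/\Dim(S)$.

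Next I would evaluate both ratios. The numerator is exact, $\I_\mu(w)=\log(1/\mu(w))=|w|\log k$, while the denominator is governed by the hypothesis $S\in\FREQ^\alpha$: this is precisely what licenses the frequency divergence lemma (Lemma~\ref{lm:3_2}), which yields $\I_\beta(w)=(\CH(\alpha)+\D(\alpha||\beta))|w|+o(|w|)$ as $w\to S$. Dividing,
\[
\frac{\I_\mu(w)}{\I_\beta(w)}=\frac{|w|\log k}{(\CH(\alpha)+\D(\alpha||\beta))|w|+o(|w|)}\longrightarrow \frac{\log k}{\CH(\alpha)+\D(\alpha||\beta)}=:c
\]
as $w\to S$, so the liminf and limsup coincide at the positive constant $c$. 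Because $1/c=(\CH(\alpha)+\D(\alpha||\beta))/\log k=\CH_k(\alpha)+\D_k(\alpha||\beta)$ is exactly the sum of the base-$k$ normalized entropy and divergence, the squeeze forces $\dim^\beta(S)=c\,\dim(S)=\dim(S)/(\CH_k(\alpha)+\D_k(\alpha||\beta))$, and the identical argument on the second chain gives the strong-dimension identity.

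The only point demanding care is the degenerate case $\dim(S)=0$ (or $\Dim(S)=0$), where the ratio form of Lemma~\ref{lm:3_1} is not literally meaningful; I expect this to be the sole obstacle. I would dispose of it by arguing directly from the Kolmogorov-complexity characterization of Theorem~\ref{th:2_6}: writing $\dim^\beta(S)=\liminf_{w\to S}\K(w)/\I_\beta(w)$ and $\dim(S)=\liminf_{w\to S}\K(w)/(|w|\log k)$, the same asymptotic for $\I_\beta(w)$ gives $\K(w)/\I_\beta(w)=c(1+o(1))\,\K(w)/(|w|\log k)$; since $\K(w)/(|w|\log k)$ is bounded, multiplication by the factor $c(1+o(1))\to c>0$ preserves the liminf exactly, so $\dim^\beta(S)=c\,\dim(S)$ holds even when the common value is $0$ (and likewise for limsup). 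This last route in fact establishes the full lemma uniformly, with the computability of $\alpha,\beta$ entering only through its use in justifying the characterizations being invoked.
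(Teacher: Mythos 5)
Your proposal is correct and follows essentially the same route as the paper: apply Lemma~\ref{lm:3_1} with $\mu$ in the role of its ``$\alpha$,'' and use the frequency divergence lemma (Lemma~\ref{lm:3_2}) to show $\I_\mu(w)/\I_\beta(w)\to 1/(\CH_k(\alpha)+\D_k(\alpha||\beta))$ as $w\to S$, so the squeeze gives both identities. Your extra handling of the degenerate case $\dim(S)=0$ via Theorem~\ref{th:2_6} is a legitimate refinement of a point the paper's ratio-form Lemma~\ref{lm:3_1} glosses over, but it is the same underlying argument (the paper derives Lemma~\ref{lm:3_1} from those very Kolmogorov-complexity characterizations), not a different approach.
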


We now recall the following constructive
strengthening of a 1949 theorem of Eggleston \cite{Eggl49}.

\begin{theorem}[\cite{Lutz:DISS,Athreya:ESDAICC}]\label{th:3_4}
If $\alpha$ is a
computable probability measure on $\Sigma$, then, for
every $\alpha$-random sequence $R\in\Sigma^\infty$,
\[\dim(R)=\Dim(R) =\CH_k(\alpha).\]
\end{theorem}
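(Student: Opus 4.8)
The plan is to sandwich the two dimensions. Since $\dim(R)\le\Dim(R)$ always holds, it suffices to prove the upper bound $\Dim(R)\le\CH_k(\alpha)$ and the lower bound $\dim(R)\ge\CH_k(\alpha)$, where $\CH_k(\alpha)=\CH(\alpha)/\log k$, so that $\CH_k(\alpha)\log k=\CH(\alpha)$. Both bounds will be obtained by passing between $s$-$\mu$-gales and $\alpha$-martingales via Observation~\ref{ob:2_1}. The single analytic input is that an $\alpha$-random $R$ is in particular $\alpha$-$1$-normal, i.e. $R\in\FREQ^\alpha$, so the frequency divergence lemma (Lemma~\ref{lm:3_2}) applied with $\beta=\alpha$ (where $\D(\alpha||\alpha)=0$) gives $\I_\alpha(w)=\CH(\alpha)|w|+o(|w|)$ as $w\to R$.

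For the upper bound, fix a rational $s>\CH_k(\alpha)$ and apply Observation~\ref{ob:2_1} to the constant $\alpha$-martingale $m\equiv 1$ with target exponent $t=s$ and target measure $\mu$: the resulting function $\tilde m(w)=\alpha(w)/\mu(w)^s$ is a computable (hence constructive) $s$-$\mu$-gale. Writing $\alpha(w)=2^{-\I_\alpha(w)}$ and $\mu(w)^s=2^{-s|w|\log k}$ gives $\tilde m(w)=2^{\,s|w|\log k-\I_\alpha(w)}=2^{\,|w|(s\log k-\CH(\alpha))+o(|w|)}$ as $w\to R$. Since $s\log k>\CH(\alpha)$, the exponent tends to $+\infty$ linearly, so $\liminf_{w\to R}\tilde m(w)=\infty$; thus $\tilde m$ succeeds strongly on $R$ and $s\in\G^{\mu,\str}_\constr(\{R\})$, giving $\Dim(R)\le s$. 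Letting $s\downarrow\CH_k(\alpha)$ yields $\Dim(R)\le\CH_k(\alpha)$. Note that this half uses only normality, not full randomness.

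For the lower bound I argue by contradiction using Schnorr's theorem (no constructive $\alpha$-martingale succeeds on an $\alpha$-random sequence). Suppose $\dim(R)<\CH_k(\alpha)$. By the gale definition of constructive dimension there are a rational $s<\CH_k(\alpha)$ and a constructive $s$-$\mu$-gale $d$ with $R\in\regSS[d]$. Observation~\ref{ob:2_1} (source $\mu$, exponent $s$; target $\alpha$, exponent $1$) turns $d$ into the constructive $\alpha$-martingale $\tilde d(w)=(\mu(w)^s/\alpha(w))\,d(w)$. Its prefactor equals $2^{\,\I_\alpha(w)-s|w|\log k}=2^{\,|w|(\CH(\alpha)-s\log k)+o(|w|)}$, which tends to $\infty$ because $\CH(\alpha)>s\log k$. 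Since $\limsup_{w\to R}d(w)=\infty$, on the infinitely many prefixes $w\prefix R$ with $d(w)\ge 1$ we have $\tilde d(w)\ge$ prefactor $\to\infty$, so $\limsup_{w\to R}\tilde d(w)=\infty$: $\tilde d$ succeeds on $R$, contradicting Schnorr's theorem. Hence $\dim(R)\ge\CH_k(\alpha)$, and combining the two bounds with $\dim(R)\le\Dim(R)$ forces $\dim(R)=\Dim(R)=\CH_k(\alpha)$.

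The main delicacy is the last step of the lower bound: multiplying the succeeding gale $d$ by the prefactor must not destroy success, and the point that makes it work is that the prefactor depends on $w$ only through $|w|$ and increases to $\infty$, so it boosts $\tilde d$ precisely on the (long) prefixes where $d$ is already large. A parallel care is needed in the upper bound, where it is essential that $\tilde m$ grows on \emph{every} prefix so that the stronger $\liminf$ condition holds. Everything else reduces to Lemma~\ref{lm:3_2}, whose $o(|w|)$ term is genuinely sublinear and hence dominated by the linear gap $|s\log k-\CH(\alpha)|\cdot|w|$; I would take care only to note that $\alpha$ computable makes $\tilde m$ and $\tilde d$ effective. (Alternatively one could route both bounds through the Kolmogorov-complexity characterization of Theorem~\ref{th:2_6} together with a Levin--Schnorr incompressibility bound, but the gale argument above uses only the tools already in hand.)
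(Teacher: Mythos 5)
The paper itself does not prove Theorem \ref{th:3_4}: it quotes it from \cite{Lutz:DISS,Athreya:ESDAICC} as a constructive strengthening of Eggleston's theorem and uses it as a black box (together with Lemma \ref{lm:3_3}) to obtain Theorem \ref{th:3_5}. So your proposal can only be compared with the paper's surrounding machinery, and there it fits well: your two gale translations are exactly the kind of move the paper makes in Observation \ref{ob:2_1} and Lemmas \ref{lm:3_1}--\ref{lm:3_3}, and your use of the frequency divergence lemma with $\beta=\alpha$ to get $\I_\alpha(w)=\CH(\alpha)|w|+o(|w|)$ along prefixes of $R$ is sound. Both halves are correct as gale arguments: the upper bound via the computable $s$-gale $\tilde m(w)=\alpha(w)/\mu(w)^s$, which succeeds strongly once $s\log k>\CH(\alpha)$, and the lower bound by converting a constructive $s$-gale succeeding on $R$ (with $s\log k<\CH(\alpha)$) into a constructive $\alpha$-martingale and contradicting Schnorr's characterization. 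The point you isolate---that the conversion factor depends on $w$ only through $\I_\alpha(w)$ and $|w|$ and tends to infinity along prefixes of $R$, so success survives the conversion---is exactly the right one.

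Two loose ends need attention. First, the claim that $\alpha$-randomness implies $R\in\FREQ^\alpha$ is the real analytic content of your argument (an effective strong law of large numbers); you assert it but do not prove or cite it. It can be discharged inside the paper's own framework: for computable positive $\alpha$, every finite-state $\alpha$-martingale $d^{(1)}_{G,\alpha}$ is computable, hence constructive, so by Schnorr's theorem none succeeds on $R$, and Theorem \ref{th:2_4} then yields that $R$ is $\alpha$-normal, in particular $\alpha$-$1$-normal. Without this observation (or an explicit citation) your proof rests on an unproved step. Second, Theorem \ref{th:3_4} assumes only that $\alpha$ is computable, not that it is positive, while your argument needs positivity in three places: the division by $\alpha(w)$ in the lower bound, Observation \ref{ob:2_1}, and the hypothesis of Lemma \ref{lm:3_2}. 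This is repairable---no Martin-L\"of $\alpha$-random sequence contains a symbol of $\alpha$-probability $0$, since such sequences form a computably enumerable open $\alpha$-null set, so one may pass to the support of $\alpha$---but the reduction deserves to be stated, and it is not entirely cosmetic: when some $\alpha(a)=0$, even the martingale characterization of randomness is delicate on null cylinders. Neither issue touches the positive case, which is all that Theorem \ref{th:3_5} uses.
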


The main result of this section is now clear.

\begin{theorem}[divergence formula for randomness and constructive dimensions]\label{th:3_5}
If $\alpha$ and $\beta$
are computable, positive probability measures
on $\Sigma$, then, for every $\alpha$-random sequence
$R\in\Sigma^\infty$,
\[\dim^\beta(R)=\Dim^\beta(R) =\frac{\CH(\alpha)}{\CH(\alpha)+\D(\alpha||\beta)}.\]
\end{theorem}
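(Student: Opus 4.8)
The plan is to obtain the theorem as an immediate corollary of Lemma \ref{lm:3_3} and Theorem \ref{th:3_4}, once I check that every $\alpha$-random sequence lies in $\FREQ^\alpha$. So the first step is to verify $R \in \FREQ^\alpha$. Since $\alpha$ is computable, every finite-state $\alpha$-martingale $d_{G,\alpha}^{(1)}$ is itself a constructive $\alpha$-martingale (its values are computable from $G$ and $\alpha$). Hence the hypothesis that $R$ is $\alpha$-random, i.e., that no constructive $\alpha$-martingale succeeds on $R$, implies in particular that no finite-state $\alpha$-martingale succeeds on $R$. By the equivalence of conditions (1) and (2) in Theorem \ref{th:2_4}, this says that $R$ is $\alpha$-normal, hence $\alpha$-$1$-normal, so $R \in \FREQ^\alpha$ as required.

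With $R \in \FREQ^\alpha$ secured, I would apply Lemma \ref{lm:3_3} with $S = R$ to obtain
\[\dim^\beta(R) = \frac{\dim(R)}{\CH_k(\alpha) + \D_k(\alpha||\beta)}, \qquad \Dim^\beta(R) = \frac{\Dim(R)}{\CH_k(\alpha) + \D_k(\alpha||\beta)}.\]
Next I would invoke Theorem \ref{th:3_4}, the constructive strengthening of Eggleston's theorem, which for the $\alpha$-random sequence $R$ gives $\dim(R) = \Dim(R) = \CH_k(\alpha)$. In particular the numerators of the two displays agree, which is precisely why $\dim^\beta(R)$ and $\Dim^\beta(R)$ will come out equal. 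Substituting $\dim(R) = \Dim(R) = \CH_k(\alpha)$ yields
\[\dim^\beta(R) = \Dim^\beta(R) = \frac{\CH_k(\alpha)}{\CH_k(\alpha) + \D_k(\alpha||\beta)}.\]

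Finally I would convert the base-$k$ quantities to the base-$2$ quantities appearing in the statement. Since $\CH_k(\alpha) = \CH(\alpha)/\log k$ and $\D_k(\alpha||\beta) = \D(\alpha||\beta)/\log k$, the common factor $1/\log k$ cancels between numerator and denominator, producing the claimed value $\CH(\alpha)/(\CH(\alpha) + \D(\alpha||\beta))$. I do not anticipate any real obstacle at this final stage: all of the quantitative content has already been packaged into the earlier results, with Lemma \ref{lm:3_2} controlling the asymptotics of $\I_\beta(w)$ along $R$, Lemma \ref{lm:3_3} turning this into the dimension ratio, and Theorem \ref{th:3_4} fixing the numerator. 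The only point that genuinely needs attention is the normality implication in the first step, and even there the argument reduces to the one-line inclusion of the finite-state $\alpha$-martingales inside the constructive $\alpha$-martingales together with Theorem \ref{th:2_4}.
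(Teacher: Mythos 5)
Your proposal is correct and follows essentially the same route as the paper, which proves Theorem \ref{th:3_5} as an immediate consequence of Lemma \ref{lm:3_3} and Theorem \ref{th:3_4}. Your explicit verification that $R \in \FREQ^\alpha$ --- via the observation that, for computable $\alpha$, every finite-state $\alpha$-martingale is a constructive $\alpha$-martingale, so Theorem \ref{th:2_4} yields $\alpha$-normality --- is a sound way to fill in the hypothesis of Lemma \ref{lm:3_3} that the paper leaves implicit, and the base-$k$ to base-$2$ conversion is handled correctly.
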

\begin{proof}
This follows immediately from Lemma \ref{lm:3_3}
and Theorem \ref{th:3_4}.
\end{proof}

We note that $\D(\alpha||\mu) =\log k -\CH(\alpha)$, so
Theorem \ref{th:3_4} is the case $\beta=\mu$ of Theorem \ref{th:3_5}. 
\section{Finite-state dimensions and data compression}\label{se:4}
This section proves finite-state compression
characterizations of finite-state $\beta$-dimension
and finite-state strong $\beta$-dimension that are
analogous to the characterizations given by parts 3 and 4 of
Theorem \ref{th:2_6}. Our argument uses the
following two technical lemmas, which are
proven in the technical appendix.

\begin{lemma}\label{lm:4_1}
Let $\beta$ be a positive probability measure on $\Sigma$, and let $C$ be an ILFSC.
Assume that $I\subseteq \Sigma^*$, $s>0$, and $\epsilon >0$
have the property that, for all $w\in I$,
\begin{equation}\label{eq:4_1}
s\geq \frac{|C(w)|}{\I_\beta(w)} +\epsilon.
\end{equation}
Then there exist an FSG $G$ and a real number
$\delta>0$ such that, for all sufficiently
long
strings $w\in I$,
\begin{equation}\label{eq:4_2}
d_{G,\beta}^{(s)}(w)\geq 2^{\delta|w|}.
\end{equation}
\end{lemma}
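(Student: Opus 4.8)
The plan is to split the argument into a routine change-of-measure computation and the construction of a single finite-state gambler $G$ from the compressor $C$; the latter carries all the content and depends only on $C$, not on $s$, $\beta$, or $I$. First I would record the identity linking the $s$-$\beta$-gale of an FSG to its underlying uniform martingale: by definition
\[ d_{G,\beta}^{(s)}(w)=\frac{\mu(w)}{\beta(w)^s}\,d_G(w)=k^{-|w|}\,2^{\,s\I_\beta(w)}\,d_G(w), \]
using $\mu(w)=k^{-|w|}$ and $\beta(w)=2^{-\I_\beta(w)}$. Thus the whole problem reduces to producing an FSG $G$ and a constant $c_0>0$, depending only on $C$, with $d_G(w)\ge c_0\,k^{|w|}\,2^{-|C(w)|}$ for all $w\in\Sigma^*$. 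Granting such a $G$, substitution gives $d_{G,\beta}^{(s)}(w)\ge c_0\,2^{\,s\I_\beta(w)-|C(w)|}$ for every $w$. For nonempty $w\in I$ we have $\I_\beta(w)>0$, so multiplying \eqref{eq:4_1} through by $\I_\beta(w)$ yields $s\I_\beta(w)-|C(w)|\ge\epsilon\,\I_\beta(w)$, and hence $d_{G,\beta}^{(s)}(w)\ge c_0\,2^{\,\epsilon\I_\beta(w)}$.

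Second, I would convert this into the exponential bound in $|w|$. Since $\beta$ is positive and $k\ge 2$, every letter satisfies $\beta(a)<1$, so $c_1:=\min_{a\in\Sigma}\log\frac{1}{\beta(a)}>0$ and $\I_\beta(w)=\sum_{i<|w|}\log\frac{1}{\beta(w[i])}\ge c_1|w|$. Therefore $d_{G,\beta}^{(s)}(w)\ge c_0\,2^{\,\epsilon c_1|w|}$ for all $w\in I$. Setting $\delta:=\tfrac12\epsilon c_1>0$, the inequality $c_0\,2^{\,\epsilon c_1|w|}\ge 2^{\,\delta|w|}$ holds as soon as $|w|\ge\frac{2}{\epsilon c_1}\log\frac{1}{c_0}$, i.e.\ for all sufficiently long $w$, which is exactly \eqref{eq:4_2}. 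Note that this half of the argument is entirely measure-theoretic and never uses that $C$ is information-lossless.

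Third, and this is the crux, I would construct the FSG $G$ realizing $d_G(w)\ge c_0\,k^{|w|}\,2^{-|C(w)|}$; equivalently, writing $\rho_G(w)=k^{-|w|}d_G(w)=\prod_{i<|w|}B(\delta_G(w[0..i-1]))(w[i])$ for the measure induced by $G$, I need $\rho_G(w)\ge c_0\,2^{-|C(w)|}$. The natural first attempt keeps the transition structure $(Q,\delta_C,q_0)$ of $C$ and bets in state $q$ by $B(q)(a)=2^{-|\nu(q,a)|}/Z_q$, where $Z_q=\sum_{a\in\Sigma}2^{-|\nu(q,a)|}$; this is a legitimate rational-valued betting function, and it succeeds at once with $c_0=1$ whenever all $Z_q\le 1$, since then $\rho_G(w)=\big(\prod_i Z_{q_i}^{-1}\big)2^{-|C(w)|}\ge 2^{-|C(w)|}$ (with $q_i$ the state reached on $w[0..i-1]$). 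The difficulty is that an ILFSC may emit $\lambda$ or very short strings on many transitions, forcing some $Z_q>1$, so the telescoping factor $\prod_i Z_{q_i}^{-1}$ can decay exponentially and destroy the bound.

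This is where information-losslessness must be used, and it is the main obstacle. Because $w\mapsto(C(w),\delta_C(w))$ is injective, the compressor can only ``defer'' a bounded amount of output before being forced to commit, so I would augment the gambler's state to buffer the pending output and bet along a prefix-code reconstruction of the deferred symbols. Information-losslessness then supplies a Kraft inequality for the committed codewords and bounds the unpaid buffer by a constant depending only on $|Q|$ and $\max_{q,a}|\nu(q,a)|$, which is precisely the source of $c_0$. This is the standard information-lossless-compressor-to-gambler construction underlying Theorem~\ref{th:2_7} \cite{Dai:FSD}; the delicate point is carrying out the buffering bookkeeping so that $G$ remains finite-state while $\rho_G(w)\ge c_0\,2^{-|C(w)|}$ holds for \emph{every} $w$ rather than merely asymptotically, since the lemma must apply to an arbitrary string set $I$.
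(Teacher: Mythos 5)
Your outer reduction is fine: the identity $d_{G,\beta}^{(s)}(w)=k^{-|w|}2^{s\I_\beta(w)}d_G(w)$, the use of \eqref{eq:4_1} to get $s\I_\beta(w)-|C(w)|\ge\epsilon\I_\beta(w)$, the lower bound $\I_\beta(w)\ge c_1|w|$ with $c_1=\min_{a\in\Sigma}\log(1/\beta(a))$, and the choice $\delta=\tfrac12\epsilon c_1$ all match the paper's own bookkeeping (the paper's $\delta_\beta$ is your $c_1$). The gap is exactly at the step you yourself flag as the crux: the existence of a \emph{single} FSG $G$ and a \emph{constant} $c_0>0$ with $d_G(w)\ge c_0\,k^{|w|}2^{-|C(w)|}$ for all $w\in\Sigma^*$. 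This is strictly stronger than what the compressor-to-gambler simulation of \cite{Dai:FSD} provides, and it is never proved in your proposal. What the literature (and the paper, as Lemma~\ref{lm:a_1}) gives is a \emph{family} of gamblers, one for each block length $l$, satisfying $\log d_G^{(1)}(w)\ge|w|\log k-|C(w)|-m(\tfrac{|w|}{l}+l)$: the multiplicative loss is $2^{-m(|w|/l+l)}$, i.e.\ exponentially decaying in $|w|$ with a slope $m/l$ that can be made arbitrarily small but never zero. The paper's proof is shaped entirely by this fact: it picks $l=\lceil 3m/(\epsilon\delta_\beta)\rceil$ so that the $\epsilon$-gap in \eqref{eq:4_1} strictly dominates the per-symbol loss, and this is precisely why the conclusion \eqref{eq:4_2} holds only for sufficiently long $w$ (namely $|w|\ge l^2$). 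Your proposal replaces this mechanism with an unproven lemma.

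Your justification for that lemma does not hold up. The assertion that information-losslessness means the compressor ``can only defer a bounded amount of output before being forced to commit'' is false: IL requires only that $w\mapsto(C(w),\delta(w))$ be injective, and IL machines need not have bounded decoding delay (this is the classical distinction between information-lossless machines and information-lossless machines of finite order). Likewise, IL supplies no per-state Kraft inequality --- as you yourself observe, $Z_q$ can exceed $1$ --- and the Kraft-type bound it does supply on length-$l$ blocks is $\sum_{u\in\Sigma^l}2^{-|C_q(u)|}\le|Q|(1+l\max_{q,a}|\nu(q,a)|)$, which is polynomial in $l$, not constant; that is exactly why the per-symbol loss in the block construction is of order $(\log l)/l$ and vanishes only in the limit $l\to\infty$. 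Note also that your claimed lemma has a nontrivial necessary condition: since $k^{-|w|}d_G(w)$ sums to at most $1$ over $w\in\Sigma^n$, it would force $\sup_n\sum_{w\in\Sigma^n}2^{-|C(w)|}\le 1/c_0<\infty$, a structural property of ILFSCs that does not follow from the counting bound above (which is linear in $n$) and which you neither state nor prove. So the proposal, as written, reduces Lemma~\ref{lm:4_1} to a statement that is stronger than the cited black box and is left entirely unestablished; to repair it you should instead invoke the actual Lemma~\ref{lm:a_1} and absorb its $m(\tfrac{|w|}{l}+l)$ loss into the $\epsilon\I_\beta(w)$ slack, which is the paper's argument.
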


\begin{lemma}\label{lm:4_2}
Let $\beta$ be a positive probability
measure on $\Sigma$, and let $G$ be an FSG.
Assume that $I\subseteq \Sigma^*$, $s>0$, and $\epsilon >0$ have
the property that, for all $w\in I$,
\begin{equation}\label{eq:4_3}
d_{G,\beta}^{(s-2\epsilon)}(w)\geq 1.
\end{equation}
Then there is an ILFSC $C$ such that,
for all $w\in I$,
\begin{equation}\label{eq:4_4}
|C(w)|\leq s\I_\beta(w).
\end{equation}
\end{lemma}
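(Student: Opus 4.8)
The plan is to build the ILFSC $C$ directly from the gambler $G$, using the gale hypothesis only at the very end to convert the output length into the desired bound; in this sense $\beta$ enters the construction only through the inequality \eqref{eq:4_3}, and $C$ itself is the standard gambler-to-compressor object. The starting observation is that the martingale $d_G$ assigns to each string $w$ the quantity $\mu(w)d_G(w)=\prod_{i=0}^{|w|-1}B(\delta(w[0..i-1]))(w[i])$, which is exactly the probability of $w$ under the rational conditional distributions $B(q)$. Writing $p_G(w)$ for this product, the definition $d_{G,\beta}^{(s-2\epsilon)}(w)=\frac{\mu(w)}{\beta(w)^{s-2\epsilon}}d_G(w)$ shows that \eqref{eq:4_3} is equivalent to $p_G(w)\geq\beta(w)^{s-2\epsilon}$, that is, to $\log\frac{1}{p_G(w)}\leq(s-2\epsilon)\I_\beta(w)$. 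Hence it suffices to produce an ILFSC whose output length on $w\in I$ is at most $\log\frac{1}{p_G(w)}$ plus an overhead that the remaining slack of $2\epsilon\,\I_\beta(w)$ can absorb.

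First I would fix a block length $\ell$, to be chosen at the end in terms of $\epsilon$ and $\beta$, and for each state $q$ of $G$ form the distribution $P_q$ on $\Sigma^\ell$ that $G$ induces on length-$\ell$ blocks when started in $q$; since $B$ is rational-valued, each $P_q$ is a rational probability measure. I would then equip each state $q$ with a prefix-free Shannon--Fano block code $\phi_q\colon\Sigma^\ell\to\{0,1\}^*$ satisfying $|\phi_q(u)|<\log\frac{1}{P_q(u)}+1$ whenever $P_q(u)>0$ (zero-probability blocks, which never occur along a string of $I$, get escape codewords so that $C$ is total and globally information-lossless). The compressor reads its input one block at a time: from a block-boundary state $q$ it emits $\phi_q(u)$ on reading the block $u$ and moves to the state $G$ reaches after $u$; a trailing partial block is encoded by an analogous Shannon--Fano code over $\Sigma^{<\ell}$.

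Information-losslessness is verified globally, as the definition requires, not merely on $I$: because the initial state $q_0$ is fixed and each $\phi_q$ is prefix-free, a decoder can strip the first codeword from $C(w)$, recover the first block and hence the next state, and iterate, so $w\mapsto(C(w),\delta(w))$ is one-to-one. For the length bound, multiplicativity of $p_G$ over blocks gives $\prod_j P_{q_j}(u_j)=p_G(w)$, whence $|C(w)|<\log\frac{1}{p_G(w)}+\lceil|w|/\ell\rceil$. Combining this with the reformulated hypothesis yields $|C(w)|\leq(s-2\epsilon)\I_\beta(w)+\lceil|w|/\ell\rceil$ for every $w\in I$.

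The final step absorbs the overhead into the slack, and this is where I expect the main obstacle to lie. Here positivity of $\beta$ is essential: setting $c_\beta=\min_{a\in\Sigma}\log\frac{1}{\beta(a)}>0$ gives $\I_\beta(w)\geq c_\beta|w|$, so the $\lceil|w|/\ell\rceil$ overhead is at most $\frac{1}{\ell c_\beta}\I_\beta(w)$ up to a single additive term from the partial block, and choosing $\ell$ large enough that $\frac{1}{\ell c_\beta}\leq 2\epsilon$ drives the total below the target $s\,\I_\beta(w)$ of \eqref{eq:4_4}. The delicate point is obtaining the inequality for \emph{all} $w\in I$ rather than only the sufficiently long ones, as in the conclusion of Lemma~\ref{lm:4_1}: the per-block rounding loss is tamed by taking $\ell$ large, but the lone additive bit from the final partial block must be kept strictly additive-per-block by matching its code to the induced conditional distribution, after which $\I_\beta(w)\geq c_\beta|w|$ converts any residual fixed loss into a multiplicative one that the $2\epsilon$ slack covers once $|w|$ exceeds a threshold determined by $\epsilon$ and $\beta$.
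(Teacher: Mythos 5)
Your reduction of \eqref{eq:4_3} to $\log(1/p_G(w))\leq(s-2\epsilon)\I_\beta(w)$, where $p_G(w)=\mu(w)d_G(w)$, and your state-wise Shannon--Fano block coding are essentially the paper's own route: the paper obtains exactly this gambler-to-compressor conversion by invoking two results of Dai et al.\ as black boxes (Lemma~\ref{lm:a_2}, which makes $G$ nonvanishing -- the role your escape codewords play -- and Lemma~\ref{lm:a_3}, whose bound $|C(w)|\leq(1+\tfrac{2}{l})|w|\log k-\log d_G^{(1)}(w)$ is your block-code estimate). However, there is a genuine gap, and you have pointed at it yourself: the lemma requires \eqref{eq:4_4} for \emph{all} $w\in I$, while your argument concludes only ``once $|w|$ exceeds a threshold.'' The culprit is that you emit a codeword for the trailing partial block. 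Any prefix-free code with at least two codewords has no empty codeword, so that emission costs at least one bit, and your bound takes the form $|C(w)|\leq(s-2\epsilon)\I_\beta(w)+\lceil|w|/\ell\rceil$; the additive $+1$ hidden in the ceiling cannot be absorbed by the slack $2\epsilon\I_\beta(w)\geq 2\epsilon c_\beta|w|$ when $w$ is short. This is not mere slack in the analysis but a failure of the construction itself: take $k=2$, $\beta$ uniform, $s=0.2$, $\epsilon=0.05$, and let $G$ bet $0.95$ on the symbol $0$ in its start state. Then $w=0$ satisfies \eqref{eq:4_3}, since $0.95\geq 2^{-0.1}\approx 0.933$, but \eqref{eq:4_4} demands $|C(0)|\leq 0.2$, i.e.\ $C(0)=\lambda$, which your compressor never produces.

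The repair is the standard trick that makes the overhead in Lemma~\ref{lm:a_3} purely multiplicative (note that bound has no additive constant): the compressor outputs \emph{nothing} on the symbols of an incomplete final block, storing them instead in its state. Information-losslessness survives because the defining injectivity is of the map $w\mapsto(C(w),\delta(w))$, so the pending partial block is recoverable from the state component; the decoder you describe recovers the completed blocks from the prefix-free output and the remainder from the state. With this change the total overhead is one rounding bit per \emph{completed} block, at most $\lfloor|w|/\ell\rfloor\leq|w|/\ell\leq\frac{1}{\ell c_\beta}\I_\beta(w)$, and since $p_G(w)\leq p_G(w')$ for the full-block prefix $w'$ of $w$, the coded part is still bounded by $\log(1/p_G(w))\leq(s-2\epsilon)\I_\beta(w)$. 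Choosing $\ell\geq 1/(2\epsilon c_\beta)$ then gives $|C(w)|\leq s\I_\beta(w)$ for every $w\in I$, with no length threshold, as the lemma requires.
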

We now prove the main result of
this section.

\begin{theorem}[compression characterizations of finite-state $\beta$-dimensions]\label{th:4_3}
If $\beta$ is a positive probability measure
on $\Sigma$, then, for each sequence $S\in\Sigma^\infty$,
\begin{equation}\label{eq:4_5}
\dimfs^\beta(S)=\inf_C\liminf_{w\rightarrow S} \frac{|C(w)|}{ \I_\beta(w)},
\end{equation}
and
\begin{equation}\label{eq:4_6}
\Dimfs^\beta(S)=\inf_C\limsup_{w\rightarrow S} \frac{|C(w)|}{\I_\beta(w)},
\end{equation}
where the infima are taken over all ILFCSs $C$.
\end{theorem}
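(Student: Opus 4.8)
The plan is to prove the two identities \eqref{eq:4_5} and \eqref{eq:4_6} by establishing, for each, the two inequalities $\le$ and $\ge$, drawing respectively on the compressor-to-gale conversion of Lemma~\ref{lm:4_1} and the gale-to-compressor conversion of Lemma~\ref{lm:4_2}. Write $\rho^-(S)=\inf_C\liminf_{w\rightarrow S}\frac{|C(w)|}{\I_\beta(w)}$ and $\rho^+(S)=\inf_C\limsup_{w\rightarrow S}\frac{|C(w)|}{\I_\beta(w)}$ for the two finite-state compression ratios on the right-hand sides. The guiding principle is the duality that a gale \emph{succeeds} on $S$ (its value is large infinitely often) exactly when the compression ratio is small \emph{infinitely often}, which governs the liminf/$\dimfs^\beta$ identity, whereas a gale \emph{succeeds strongly} (its value is large on all sufficiently long prefixes) exactly when the compression ratio is small \emph{almost everywhere}, which governs the limsup/$\Dimfs^\beta$ identity. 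The whole argument is thus a matter of choosing, in each of the four cases, the right prefix set $I\subseteq\{w:w\prefix S\}$ --- cofinal (arbitrarily long prefixes) for the weak dimensions and cofinite (all long prefixes) for the strong dimensions --- and feeding it to the appropriate lemma.

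For $\dimfs^\beta(S)\le\rho^-(S)$ and $\Dimfs^\beta(S)\le\rho^+(S)$ I would fix an ILFSC $C$, let $r$ be its liminf (resp.\ limsup) compression ratio, fix $\epsilon>0$, and set $s=r+2\epsilon$. In the liminf case the set $I=\{w\prefix S:\frac{|C(w)|}{\I_\beta(w)}<r+\epsilon\}$ contains arbitrarily long prefixes of $S$; in the limsup case the same inequality holds for \emph{all} sufficiently long prefixes, so $I$ is cofinite among prefixes. In either case $s\ge\frac{|C(w)|}{\I_\beta(w)}+\epsilon$ on $I$, so Lemma~\ref{lm:4_1} supplies an FSG $G$ and $\delta>0$ with $d_{G,\beta}^{(s)}(w)\ge 2^{\delta|w|}$ for all sufficiently long $w\in I$. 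When $I$ is cofinal this forces $\limsup_{w\rightarrow S}d_{G,\beta}^{(s)}(w)=\infty$ (success), giving $s\in\G_\mathrm{FS}^\beta(\{S\})$ and $\dimfs^\beta(S)\le s$; when $I$ is cofinite it forces $\liminf_{w\rightarrow S}d_{G,\beta}^{(s)}(w)=\infty$ (strong success), giving $s\in\G_\mathrm{FS}^{\beta,\str}(\{S\})$ and $\Dimfs^\beta(S)\le s$. Letting $\epsilon\to 0$ and then taking the infimum over $C$ yields both $\le$ inequalities.

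For the reverse inequalities $\rho^-(S)\le\dimfs^\beta(S)$ and $\rho^+(S)\le\Dimfs^\beta(S)$ I would start from an arbitrary $s''\in\G_\mathrm{FS}^\beta(\{S\})$ (resp.\ $\G_\mathrm{FS}^{\beta,\str}(\{S\})$), witnessed by an FSG $G$ whose gale $d_{G,\beta}^{(s'')}$ succeeds (resp.\ succeeds strongly) on $S$. Fix $\epsilon>0$ and set $s=s''+2\epsilon$, so that the hypothesis $d_{G,\beta}^{(s-2\epsilon)}(w)=d_{G,\beta}^{(s'')}(w)\ge 1$ of Lemma~\ref{lm:4_2} holds on the prefix set $I$ where the gale is at least $1$. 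Success gives $d_{G,\beta}^{(s'')}(w)\ge 1$ for infinitely many prefixes (cofinal $I$), while strong success gives it for all sufficiently long prefixes (cofinite $I$). Lemma~\ref{lm:4_2} then produces an ILFSC $C$ with $\frac{|C(w)|}{\I_\beta(w)}\le s=s''+2\epsilon$ for all $w\in I$, whence $\liminf_{w\rightarrow S}\frac{|C(w)|}{\I_\beta(w)}\le s''+2\epsilon$ in the cofinal case and $\limsup_{w\rightarrow S}\frac{|C(w)|}{\I_\beta(w)}\le s''+2\epsilon$ in the cofinite case. Letting $\epsilon\to 0$ bounds $\rho^-(S)$ (resp.\ $\rho^+(S)$) by $s''$, and taking the infimum over $s''\in\G_\mathrm{FS}^\beta(\{S\})$ (resp.\ over $\G_\mathrm{FS}^{\beta,\str}(\{S\})$) gives $\rho^-(S)\le\dimfs^\beta(S)$ (resp.\ $\rho^+(S)\le\Dimfs^\beta(S)$). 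Combining the two directions proves \eqref{eq:4_5} and \eqref{eq:4_6}.

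The substantive work is entirely contained in Lemmas~\ref{lm:4_1} and~\ref{lm:4_2}, so I do not expect a genuine obstacle at the level of this theorem; the only care required is bookkeeping. The one point to get exactly right is the matching of quantifiers: the liminf/$\dimfs^\beta$ identity must use a set $I$ that is merely cofinal, so that the resulting gale only needs to \emph{succeed}, whereas the limsup/$\Dimfs^\beta$ identity must use a cofinite $I$, so that the gale \emph{succeeds strongly}; swapping these would break the argument. A minor preliminary remark worth including is that both sides of each identity lie in $[0,1]$ and that $\G_\mathrm{FS}^\beta(\{S\})$ and $\G_\mathrm{FS}^{\beta,\str}(\{S\})$ are nonempty (for all sufficiently large $s$ the constant-$\mu$-betting FSG already yields a strongly succeeding $d_{G,\beta}^{(s)}$, since $\beta$ is positive with $\beta(a)<1$), so that the infima over the gale-success sets are well defined and the $\epsilon\to 0$ limits are legitimate.
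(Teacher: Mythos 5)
Your proposal is correct and takes essentially the same route as the paper's own proof: both directions rest on Lemma~\ref{lm:4_1} (compressor to gale) and Lemma~\ref{lm:4_2} (gale to compressor), with a cofinal prefix set $I$ driving the $\dimfs^\beta$/liminf case and a cofinite $I$ driving the $\Dimfs^\beta$/limsup case, exactly as in the paper. The only differences are cosmetic (your explicit $s=r+2\epsilon$ parametrization versus the paper's ``let $s$ exceed the right-hand side,'' plus your optional well-definedness remark).
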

\begin{proof}
Let $\beta$ and $S$ be as given. We
first prove that the left-hand sides of
\eqref{eq:4_5} and \eqref{eq:4_6} do not exceed the
right-hand sides. For this, let $C$
be an ILFSC. It suffices to show that
\begin{equation}\label{eq:4_7}
\dimfs^\beta(S)\leq \liminf_{w\rightarrow S} \frac{|C(w)|}{\I_\beta(w)}
\end{equation}
and
\begin{equation}\label{eq:4_8}
\Dimfs^\beta(S)\leq \limsup_{w\rightarrow S} \frac{|C(w)|}{\I_\beta(w)}.
\end{equation}

To see that \eqref{eq:4_7} holds, let $s$ exceed the right-hand side. Then
there exist an infinite set $I$ of
prefixes of $S$ and an $\epsilon >0$ such that
\eqref{eq:4_1} holds for all $w\in I$. It follows by
Lemma \ref{lm:4_1} that there exist an FSG $G$
and a $\delta>0$ such that, for all
sufficiently long $w\in I$, $d_{G,\beta}^{(s)}(w)\geq 2^{\delta |w|}$.
Since $I$ is infinite and $\delta >0$, this implies
that $S\in \regSS[d_{G,\beta}^{(s)}]$,  whence $\dimfs^\beta(S)\leq s$.
This establishes \eqref{eq:4_7}.

The proof that \eqref{eq:4_8} holds is
identical to the preceding paragraph,
except that $I$ is now a cofinite
set of prefixes of $S$, so $S\in \strSS[d_{G,\beta}^{(s)}]$.

It remains to be shown that the
right-hand sides of \eqref{eq:4_5} and \eqref{eq:4_6}
do not exceed the left-hand sides.
To see this for \eqref{eq:4_5}, let $s>\dimfs^\beta(S)$. It
suffices to show that there is an ILFSC $C$
such that
\begin{equation}\label{eq:4_9}
\liminf_{w\rightarrow S} \frac{|C(w)|}{\I_\beta(w)} \leq s.
\end{equation}
By our choice of $s$ there exists $\epsilon >0$
such that $s -2\epsilon > \dimfs^\beta(S)$. This implies
that there is an infinite set $I$ of
prefixes of $S$ such that \eqref{eq:4_3} holds for
all $w\in I$. Choose $C$ for $G$, $I$, $S$, and $\epsilon$ as
in Lemma \ref{lm:4_2}. Then
\begin{equation}\label{eq:4_10}
\liminf_{w\rightarrow S} \frac{|C(w)|}{\I_\beta(w)}\leq \inf_{w\in I} \frac{|C(w)|}{\I_\beta(w)}\leq s
\end{equation}
by \eqref{eq:4_4}, so \eqref{eq:4_9} holds.

The proof that the right-hand side of \eqref{eq:4_6} does not exceed the left-hand
side is identical to the preceding paragraph,
except that the limits inferior in \eqref{eq:4_9} and
\eqref{eq:4_10} are now limits superior, and the set
$I$ is now a cofinite set of prefixes of $S$.
\end{proof} 
\section{Divergence formula for normality and finite-state dimensions}\label{se:5}
This section proves the divergence formula for
$\alpha$-normality, finite-state $\beta$-dimension, and
finite-state strong $\beta$-dimension. As should now
be clear, Theorem \ref{th:4_3} enables us to
proceed in analogy with section \ref{se:3}.

\begin{lemma}\label{lm:5_1}
If $\alpha$ and $\beta$ are
positive probability measures on $\Sigma$, then,
for all $S\in\Sigma^\infty$,
\begin{equation}\label{eq:5_1}
\liminf_{w\rightarrow S}\frac{\I_\alpha(w)}{\I_\beta(w)}\leq \frac{\dimfs^\beta(S)}{\dimfs^\alpha(S)}\leq \limsup_{w\rightarrow S}\frac{\I_\alpha(w)}{\I_\beta(w)},
\end{equation}
and
\begin{equation}\label{eq:5_2}
\liminf_{w\rightarrow S}\frac{\I_\alpha(w)}{\I_\beta(w)}\leq \frac{\Dimfs^\beta(S)}{\Dimfs^\alpha(S)}\leq \limsup_{w\rightarrow S}\frac{\I_\alpha(w)}{\I_\beta(w)}.
\end{equation}
\end{lemma}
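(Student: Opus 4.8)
The plan is to proceed in exact analogy with the derivation of Lemma \ref{lm:3_1} from the Kolmogorov-complexity characterizations, but substituting the finite-state compression characterizations of Theorem \ref{th:4_3} for Theorem \ref{th:2_6}. The one genuinely new feature, relative to section \ref{se:3}, is that the finite-state $\beta$-dimensions are expressed as \emph{infima over compressors}; the whole argument hinges on the fact that this infimum interacts cleanly with the bounds. First I would fix an arbitrary ILFSC $C$ and record the pointwise identity
\[
\frac{|C(w)|}{\I_\beta(w)} = \frac{\I_\alpha(w)}{\I_\beta(w)}\cdot\frac{|C(w)|}{\I_\alpha(w)},
\]
valid for all $w\neq\lambda$. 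The crucial observation is that the factor $r(w):=\I_\alpha(w)/\I_\beta(w)$ does not depend on $C$, and, because $\alpha$ and $\beta$ are positive, the per-symbol self-informations $\log(1/\alpha(a))$ and $\log(1/\beta(a))$ are bounded between positive constants, so $r(w)$ is bounded between positive constants. In particular $\liminf_{w\to S} r(w)$ and $\limsup_{w\to S} r(w)$ are finite positive quantities depending only on $\alpha$, $\beta$, and $S$.

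Next I would establish the elementary real-analysis fact that if $a,b:\Sigma^*\to[0,\infty)$ satisfy $b=r\cdot a$ with $r$ bounded, then, with all limits taken as $w\to S$,
\[
(\liminf r)(\liminf a)\;\leq\;\liminf b\;\leq\;(\limsup r)(\liminf a),
\]
together with the companion statement in which $\liminf b$ and $\liminf a$ are replaced by $\limsup b$ and $\limsup a$ (the factors $\liminf r$ and $\limsup r$ remaining unchanged). The lower bounds follow by replacing $r(w)$ by $\liminf r - \epsilon$ for all sufficiently long $w$ and letting $\epsilon\to0$; the upper bounds follow by passing to a subsequence of prefixes along which $a$ converges to its $\liminf$ (respectively $\limsup$), which is finite since the compression ratio $a(w)=|C(w)|/\I_\alpha(w)$ of an ILFSC is bounded, and then using boundedness of $r$.

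I then apply this with $a(w)=|C(w)|/\I_\alpha(w)$ and $b(w)=|C(w)|/\I_\beta(w)$, so that $\liminf_{w\to S} a$ is exactly the $C$-summand in the Theorem \ref{th:4_3} formula for $\dimfs^\alpha(S)$, and likewise for $b$ and $\dimfs^\beta(S)$. For each fixed $C$ this gives
\[
(\liminf r)\,\liminf_{w\to S}\frac{|C(w)|}{\I_\alpha(w)}\;\leq\;\liminf_{w\to S}\frac{|C(w)|}{\I_\beta(w)}\;\leq\;(\limsup r)\,\liminf_{w\to S}\frac{|C(w)|}{\I_\alpha(w)}.
\]
Taking the infimum over all ILFSCs $C$, and using that $\liminf r$ and $\limsup r$ are constants independent of $C$ (so that $\inf_C c\cdot x_C = c\cdot\inf_C x_C$ for $c\geq0$), Theorem \ref{th:4_3} converts these into $(\liminf r)\,\dimfs^\alpha(S)\leq \dimfs^\beta(S)\leq(\limsup r)\,\dimfs^\alpha(S)$, which is \eqref{eq:5_1} upon dividing by $\dimfs^\alpha(S)$. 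Inequality \eqref{eq:5_2} is obtained identically, using the companion $\limsup$ form of the elementary fact together with the strong-dimension clause of Theorem \ref{th:4_3}.

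I expect the only real subtlety to be bookkeeping rather than ideas: one must check that the infimum over $C$ passes through the multiplicative bounds legitimately, which works precisely because the bounding factors $\liminf r$ and $\limsup r$ are $C$-independent, and one should keep the inequalities in product form to accommodate the degenerate case $\dimfs^\alpha(S)=0$ before dividing. No ingredient beyond Theorem \ref{th:4_3} and the boundedness of $r$ (guaranteed by positivity of $\alpha,\beta$) should be needed; note in particular that, unlike Lemma \ref{lm:3_1}, no computability hypothesis is required.
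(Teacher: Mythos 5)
Your proof is correct and is precisely the argument the paper intends: the paper leaves Lemma \ref{lm:5_1} implicit (``Theorem \ref{th:4_3} enables us to proceed in analogy with section \ref{se:3}''), and your derivation via the decomposition $|C(w)|/\I_\beta(w) = \bigl(\I_\alpha(w)/\I_\beta(w)\bigr)\cdot|C(w)|/\I_\alpha(w)$, the boundedness of the ratio for positive measures, and the passage of $\inf_C$ through the $C$-independent factors is exactly that analogy carried out in full. Your explicit handling of the infimum over compressors and of the degenerate case $\dimfs^\alpha(S)=0$ (by keeping the inequalities in product form) supplies the bookkeeping the paper glosses over.
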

\begin{lemma}\label{lm:5_2}
If $\alpha$ and $\beta$ are
positive probability measures on $\Sigma$, then,
for all $S\in\FREQ^\alpha$,
\[\dimfs^\beta(S)=\frac{\dimfs(S)}{\CH_k(\alpha)+\D_k(\alpha|| \beta)},\]
and
\[\Dimfs^\beta(S) =\frac{\Dimfs(S)}{\CH_k(\alpha) + \D_k(\alpha||\beta)}.\]
\end{lemma}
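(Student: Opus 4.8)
The plan is to mirror the constructive argument behind Lemma \ref{lm:3_3}, replacing Lemma \ref{lm:3_1} by its finite-state counterpart Lemma \ref{lm:5_1} while reusing the frequency divergence lemma (Lemma \ref{lm:3_2}) verbatim. The crucial move is to instantiate Lemma \ref{lm:5_1} with the \emph{uniform} measure $\mu$ in the role of its first measure and $\beta$ in the role of the second. Since $\mu$ is positive this is legitimate, and because $\dimfs^\mu(S)=\dimfs(S)$ and $\Dimfs^\mu(S)=\Dimfs(S)$, inequalities \eqref{eq:5_1} and \eqref{eq:5_2} become
\[\liminf_{w\rightarrow S}\frac{\I_\mu(w)}{\I_\beta(w)}\leq \frac{\dimfs^\beta(S)}{\dimfs(S)}\leq \limsup_{w\rightarrow S}\frac{\I_\mu(w)}{\I_\beta(w)}\]
together with the analogous chain for $\Dimfs^\beta(S)/\Dimfs(S)$.

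Next I would evaluate the two outer quantities. Here $\I_\mu(w)=\log\frac{1}{\mu(w)}=|w|\log k$ exactly, while the hypothesis $S\in\FREQ^\alpha$ lets me invoke Lemma \ref{lm:3_2} to write $\I_\beta(w)=(\CH(\alpha)+\D(\alpha||\beta))|w|+o(|w|)$ as $w\rightarrow S$. Dividing, the ratio $\I_\mu(w)/\I_\beta(w)$ converges (a genuine limit, not merely $\liminf=\limsup$) to $\frac{\log k}{\CH(\alpha)+\D(\alpha||\beta)}$; note that $\alpha$ positive and $k\geq 2$ force $\CH(\alpha)>0$, so the denominator is strictly positive and the limit is well defined. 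Consequently both the $\liminf$ and the $\limsup$ above equal this common value, pinning down
\[\frac{\dimfs^\beta(S)}{\dimfs(S)}=\frac{\log k}{\CH(\alpha)+\D(\alpha||\beta)},\]
and likewise for the strong dimensions.

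Finally I would rewrite this in the normalized entropy notation. Using $\CH_k(\alpha)=\CH(\alpha)/\log k$ and $\D_k(\alpha||\beta)=\D(\alpha||\beta)/\log k$, dividing numerator and denominator of the right-hand side by $\log k$ turns $\frac{\log k}{\CH(\alpha)+\D(\alpha||\beta)}$ into $\frac{1}{\CH_k(\alpha)+\D_k(\alpha||\beta)}$, which yields both displayed identities of the lemma. I do not expect a serious obstacle, since the real content is already packaged in Lemmas \ref{lm:5_1} and \ref{lm:3_2} and this argument is essentially their assembly. The one point deserving care is the degenerate case $\dimfs(S)=0$ (not excluded for a merely $1$-normal $S$), where the ratio in Lemma \ref{lm:5_1} is formally undefined. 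There I would instead read the two inequalities of Lemma \ref{lm:5_1} in their equivalent product form, namely $\dimfs^\beta(S)\leq \dimfs(S)\cdot\limsup_{w\rightarrow S}\frac{\I_\mu(w)}{\I_\beta(w)}$ together with the matching lower bound, so that the common limit $\frac{\log k}{\CH(\alpha)+\D(\alpha||\beta)}$ forces $\dimfs^\beta(S)=0$ as well and both identities remain valid.
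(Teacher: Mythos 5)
Your proof is correct and takes essentially the same route as the paper: instantiate Lemma \ref{lm:5_1} with $\mu$ in the role of the first measure, and use the frequency divergence lemma (Lemma \ref{lm:3_2}) to show that $\I_\mu(w)/\I_\beta(w)$ converges to $1/(\CH_k(\alpha)+\D_k(\alpha||\beta))$ as $w\rightarrow S$. Your explicit handling of the degenerate case $\dimfs(S)=0$ (via the product form of the inequalities) is a point of care that the paper's one-line proof glosses over, but it does not change the substance of the argument.
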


We next prove a finite-state analog of
Theorem \ref{th:3_4}.

\begin{theorem}\label{th:5_3}
If $\alpha$ is a
probability measure on $\Sigma$, then, for every
$\alpha$-normal sequence $R\in\Sigma^\infty$,
\[\dimfs(R)=\Dimfs(R) =\CH_k(\alpha).\]
\end{theorem}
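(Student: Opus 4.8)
The plan is to reduce the entire statement to Theorem~\ref{th:2_5}, which expresses $\dimfs(R)$ and $\Dimfs(R)$ as the infima over block length $l$ of the normalized lower and upper block entropy rates $\mH_l^-(R)$ and $\mH_l^+(R)$. It therefore suffices to prove that, for every $0<l\in\N$, both of these rates equal $\CH_k(\alpha)$; taking the infimum over $l$ of a quantity that does not depend on $l$ then yields $\dimfs(R)=\Dimfs(R)=\CH_k(\alpha)$ at once. This is the exact finite-state mirror of the way Theorem~\ref{th:3_4} is obtained from the Kolmogorov-complexity characterization in the constructive setting.

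First I would fix $l$ and examine the $l$-block frequency measures $\pi_{R,n}^{(l)}$ on $\Sigma^l$. Since $R$ is $\alpha$-normal it is in particular $\alpha$-$l$-normal, so for every $w\in\Sigma^l$ we have $\lim_n \pi_{R,n}(w)=\alpha(w)=\prod_{i=0}^{l-1}\alpha(w[i])$. Thus the measures $\pi_{R,n}^{(l)}$ converge, coordinatewise on the finite set $\Sigma^l$, to the $l$-fold product measure $\alpha^{(l)}$ induced by $\alpha$. Because Shannon entropy is continuous on the finite-dimensional probability simplex over $\Sigma^l$ (with the convention $0\log\frac{1}{0}=0$, continuity holds even at boundary points), this gives $\lim_n \CH(\pi_{R,n}^{(l)})=\CH(\alpha^{(l)})$. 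In particular the limit exists, so the liminf and limsup coincide and $\mH_l^-(R)=\mH_l^+(R)=\frac{1}{l\log k}\CH(\alpha^{(l)})$. It then remains to evaluate $\CH(\alpha^{(l)})$, which I would do via the additivity of entropy for product measures: expanding $\CH(\alpha^{(l)})=\sum_{w\in\Sigma^l}\alpha^{(l)}(w)\sum_{i=0}^{l-1}\log\frac{1}{\alpha(w[i])}$, interchanging the sums, and using that each coordinate marginal of $\alpha^{(l)}$ is $\alpha$, one obtains $\CH(\alpha^{(l)})=l\,\CH(\alpha)$. Substituting gives $\mH_l^-(R)=\mH_l^+(R)=\frac{l\,\CH(\alpha)}{l\log k}=\CH_k(\alpha)$ for every $l$, and Theorem~\ref{th:2_5} finishes the argument.

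I do not expect a genuine obstacle here: granted Theorem~\ref{th:2_5}, the proof is essentially a continuity-plus-additivity computation. The only two points needing care are the interchange of limit and entropy, which is legitimate precisely because $l$ is fixed before letting $n\to\infty$ so that $\CH$ is being evaluated on a single fixed finite simplex, and the bookkeeping in the additivity identity; neither threatens the argument. Note also that, unlike the later divergence formula, this statement requires no positivity hypothesis on $\alpha$, and indeed the proof above goes through verbatim when $\alpha$ has zero coordinates, since the entropy conventions and the continuity of $t\mapsto t\log\frac{1}{t}$ at $t=0$ absorb those cases.
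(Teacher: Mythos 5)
Your proposal is correct and follows essentially the same route as the paper's own proof: both reduce the statement to Theorem~\ref{th:2_5}, use $\alpha$-$l$-normality plus the continuity of entropy on the finite simplex over $\Sigma^l$ to conclude $\lim_n \CH(\pi^{(l)}_{R,n})=\CH(\alpha^{(l)})=l\,\CH(\alpha)$, and then observe that the resulting value $\CH_k(\alpha)$ is independent of $l$, so the infima collapse. Your explicit additivity computation for $\CH(\alpha^{(l)})$ and your remark that no positivity hypothesis on $\alpha$ is needed are both consistent with the paper, which simply notes these facts without elaboration.
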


We now have our third main theorem.

\begin{theorem}[divergence theorem for normality and finite-state dimensions]\label{th:5_4}
If $\alpha$ and $\beta$ are   positive probability measures on $\Sigma$,
then, for every $\alpha$-normal sequence
$R\in\Sigma^\infty$,
\[\dimfs^\beta(R) =\Dimfs^\beta(R)=\frac{\CH(\alpha)}{\CH(\alpha) +\D(\alpha ||\beta)}.\]
\end{theorem}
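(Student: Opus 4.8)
The plan is to mirror exactly the derivation of Theorem~\ref{th:3_5} from Lemma~\ref{lm:3_3} and Theorem~\ref{th:3_4}, substituting each constructive ingredient by its finite-state counterpart. First I would note that an $\alpha$-normal sequence $R$ is in particular $\alpha$-$1$-normal, so that $R\in\FREQ^\alpha$. This is precisely the hypothesis required to apply Lemma~\ref{lm:5_2} to $R$, yielding
\[
\dimfs^\beta(R)=\frac{\dimfs(R)}{\CH_k(\alpha)+\D_k(\alpha||\beta)},
\qquad
\Dimfs^\beta(R)=\frac{\Dimfs(R)}{\CH_k(\alpha)+\D_k(\alpha||\beta)}.
\]

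Next I would invoke Theorem~\ref{th:5_3}, the finite-state analog of Eggleston's theorem, which asserts that for $\alpha$-normal $R$ we have $\dimfs(R)=\Dimfs(R)=\CH_k(\alpha)$. Substituting this into both identities collapses the two dimensions to the common value $\CH_k(\alpha)/(\CH_k(\alpha)+\D_k(\alpha||\beta))$. The only remaining step is the purely arithmetic conversion from base $k$ to base $2$: since $\CH_k(\alpha)=\CH(\alpha)/\log k$ and $\D_k(\alpha||\beta)=\D(\alpha||\beta)/\log k$, the common factor $1/\log k$ cancels between numerator and denominator, producing the stated value $\CH(\alpha)/(\CH(\alpha)+\D(\alpha||\beta))$.

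I do not expect any genuine obstacle at the level of this theorem, which is an immediate corollary of Lemma~\ref{lm:5_2} and Theorem~\ref{th:5_3}, exactly as Theorem~\ref{th:3_5} was an immediate corollary of Lemma~\ref{lm:3_3} and Theorem~\ref{th:3_4}. All the real difficulty has been front-loaded into those two supporting results: into Lemma~\ref{lm:5_2}, whose proof should combine the compression characterization of Theorem~\ref{th:4_3} with the frequency divergence lemma (Lemma~\ref{lm:3_2}) through the ratio bounds of Lemma~\ref{lm:5_1}; and into Theorem~\ref{th:5_3}, which presumably rests on the block-entropy characterization of Theorem~\ref{th:2_5}. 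One feature worth flagging, though not an obstacle, is that unlike Theorem~\ref{th:3_5} this finite-state divergence formula carries no computability hypothesis on $\alpha$ or $\beta$, consistent with the fact that neither Lemma~\ref{lm:5_2} nor Theorem~\ref{th:5_3} appeals to computability.
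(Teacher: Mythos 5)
Your proposal is correct and follows exactly the paper's own route: Theorem~\ref{th:5_4} is derived there as an immediate consequence of Lemma~\ref{lm:5_2} and Theorem~\ref{th:5_3}, just as you do. Your added details---noting that $\alpha$-normality implies $R\in\FREQ^\alpha$ so Lemma~\ref{lm:5_2} applies, and spelling out the cancellation of $1/\log k$ when passing from $\CH_k,\D_k$ to $\CH,\D$---are exactly the steps the paper leaves implicit, and your observation about the absence of a computability hypothesis is also accurate.
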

\begin{proof}
This follows immediately from
Lemma \ref{lm:5_2} and Theorem \ref{th:5_3}.
\end{proof}

We again note that $\D(\alpha || \beta) =\log k -\CH(\alpha)$,
so Theorem \ref{th:5_3} is the case $\beta =\mu$ of
Theorem \ref{th:5_4}. 

\vspace{0.6cm}

\noindent{\bf Acknowledgments.}
I thank Xiaoyang Gu and Elvira Mayordomo for useful discussions.

\bibliographystyle{abbrv}
\bibliography{dim,rbm,main,random,dimrelated}

\newpage\appendix
\section{Appendix -- Various Proofs}

\begin{proof}[{\bf Proof of Lemma \ref{lm:3_2}}]
Assume the hypothesis, and let $S\in\FREQ^\alpha$.
Then, as $w\rightarrow S$, we have
\begin{align*}
\I_\beta(w)& = \sum_{i=0}^{|w|-1} \log\frac{1}{\beta(w[i])}\\
&= \sum_{a\in \Sigma} \#(a,w)\log\frac{1}{\beta(a)}\\
&=|w|\sum_{a\in\Sigma} \freq_a(w)\log\frac{1}{\beta(a)}\\
&=|w|\sum_{a\in\Sigma} (\alpha(a)+ o(1))\log\frac{1}{\beta(a)}\\
&=|w|\sum_{a\in\Sigma} \alpha(a)\log\frac{1}{\beta(a)}+ o(|w|)\\
&=|w|\sum_{a\in\Sigma} \left ( \alpha(a)\log\frac{1}{\alpha(a)} +\alpha(a)\log\frac{\alpha(a)}{\beta(a)} \right) +o(|w|)\\
&=( \CH(\alpha)+\D(\alpha ||\beta)) |w| + o(|w|).
\end{align*}
\end{proof}

\begin{proof}[{\bf Proof of Lemma \ref{lm:3_3}}]
Let $\alpha$, $\beta$, and $S$ be as given. By
the frequency divergence lemma, we have
\begin{align*}
\frac{\I_\mu(w)}{\I_\beta(w)}&=\frac{|w|\log k}{ (\CH(\alpha) +\D(\alpha||\beta))|w| +o(|w|)}\\
&=\frac{\log k }{\CH(\alpha)+\D(\alpha||\beta) +o(1)}\\
&=\frac{\log k }{\CH(\alpha)+\D(\alpha||\beta)} +o(1)\\
&=\frac{1}{\CH_k(\alpha) +\D_k(\alpha||\beta)} +o(1)
\end{align*}
as $w\rightarrow S$. The present lemma follows from
this and Lemma \ref{lm:3_1}.
\end{proof}

The following lemma summarizes the first
part of the proof of Theorem \ref{th:2_7}.

\begin{lemma}[\cite{Dai:FSD}]\label{lm:a_1}
For each ILFSC $C$ there
is an integer $m\in \Z^+$ such that, for each
$l\in \Z^+$, there is an FSG $G$ such that,
for all $w\in\Sigma^*$,
\begin{equation}\label{eq:a_1}
\log d_G^{(1)}(w) \geq |w| \log k -|C(w)| -m(\tfrac{|w|}{l} +l).
\end{equation}
\end{lemma}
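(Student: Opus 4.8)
The plan is to turn $C=(Q,\delta,q_0,\nu)$ into a gambler by a telescoping ``payback'' identity driven by the nonnegative $|Q|\times|Q|$ matrix $A$ with entries $A_{q,r}=\sum_{a\in\Sigma,\,\delta(q,a)=r}2^{-|\nu(q,a)|}$. For any betting function of the product form $B(q)(a)=2^{-|\nu(q,a)|}\,v_{\delta(q,a)}/v_q$, coming from a strictly positive weight vector $v$, the defining recursion $d_G^{(1)}(wa)=k\,d_G^{(1)}(w)\,B(\delta(w))(a)$ telescopes to
\[
d_G^{(1)}(w)=k^{|w|}\,2^{-|C(w)|}\,\frac{v_{\delta(w)}}{v_{q_0}},
\]
so that $\log d_G^{(1)}(w)\ge |w|\log k-|C(w)|-\log\!\big(v_{q_0}/\min_{r}v_r\big)$ with a single additive constant. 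The only constraints are that the row sums $\sum_a B(q)(a)=(Av)_q/v_q$ must not exceed $1$ (so that $B(q)$ can be completed to a probability vector by dumping the slack on a fixed symbol, which only increases $d_G^{(1)}$) and that the entries be rational.

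First I would establish, from information-losslessness, that $\rho(A)\le 1$. Fixing a reachable $q$ and an $x$ with $\delta(x)=q$, the map $u\mapsto(C\text{-output}(q,u),\delta(q,u))$ on $\Sigma^l$ is injective, since a collision would force $(C(xu_1),\delta(xu_1))=(C(xu_2),\delta(xu_2))$ with $xu_1\ne xu_2$. Hence for each final state the block outputs are \emph{distinct} binary strings, and counting them by length (at most $2^j$ of length $j$, each contributing $2^{-j}$) gives $(A^l\,\mathbf 1)_q=\sum_{u\in\Sigma^l}2^{-|C\text{-output}(q,u)|}\le |Q|\,(l\,s+1)$, where $s=\max_{q,a}|\nu(q,a)|$. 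This polynomial-in-$l$ bound rules out exponential growth of $A^l$, so $\rho(A)\le 1$, and Perron--Frobenius (applied component by component to handle reducibility) then supplies a strictly positive $v$ with $Av\le v$; this is the vector that makes the telescoping legal.

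The hard part will be that $v$ is in general irrational --- precisely when $\rho(A)=1$, i.e.\ when $C$ compresses at the optimal rate, the only solution of $Av\le v$ is the Perron ray --- whereas an FSG must bet with rational fractions. This is exactly where the block length $l$ enters: I would approximate $v$ by a rational $\tilde v$ to relative accuracy $O(1/l)$ and uniformly rescale the bets by $1/(1+O(1/l))$ to restore $(A\tilde v)_q/\tilde v_q\le 1$, so that the resulting FSG depends on $l$ only through the denominators of $\tilde v$. The rescaling costs a factor $(1+O(1/l))^{-|w|}$, i.e.\ a log-loss of $O(|w|/l)$, while the telescoping boundary term $\log(\tilde v_{q_0}/\min_r\tilde v_r)$ stays bounded by a constant $m_0$ depending only on $C$. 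Collecting these, $\log d_G^{(1)}(w)\ge |w|\log k-|C(w)|-m_0-O(|w|/l)$, and choosing $m\in\Z^+$ to dominate both the hidden $O(\cdot)$ constant and $m_0$ (using $l\ge 1$ to absorb the additive $m_0$ into the $ml$ slack) yields the stated inequality $\log d_G^{(1)}(w)\ge |w|\log k-|C(w)|-m(|w|/l+l)$. The delicate points to verify are that $m$ is extracted from $C$ \emph{before} $l$ is chosen and that the $O(1/l)$ perturbation of $v$ leaves every bet a legitimate element of $\Delta_\Q(\Sigma)$.
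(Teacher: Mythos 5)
The decisive step in your argument is unsupported, and as stated it is false. From the polynomial bound $(A^l\mathbf 1)_q\le|Q|(ls+1)$ you correctly conclude $\rho(A)\le 1$, but you then assert that Perron--Frobenius, ``applied component by component,'' supplies a strictly positive $v$ with $Av\le v$. For reducible nonnegative matrices, $\rho(A)\le 1$ does \emph{not} imply the existence of a positive subinvariant vector: take
\begin{equation*}
A=\begin{pmatrix}1&1\\0&1\end{pmatrix},
\end{equation*}
which has $\rho(A)=1$ and only \emph{linearly} growing powers --- i.e., it passes exactly the growth test you prove --- yet $(Av)_1=v_1+v_2\le v_1$ forces $v_2\le 0$. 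More generally, any spectral-radius-one class with even one outgoing edge kills the inequality at that class, because there is no slack to absorb the outgoing weight; and your own Kraft computation shows IL compressors can have $(A^l\mathbf 1)_q$ growing linearly, so nothing you have established rules this structure out. (It may be that information-losslessness structurally forbids a $\rho=1$ class from having exits, but that would require a genuine decodability argument --- your growth bound provably cannot deliver it, since the counterexample matrix satisfies the bound.) Your rational-approximation-and-rescale step cannot repair this, because it presupposes a positive $v$ to approximate.

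The good news is that your architecture survives with one substitution: instead of an exact subinvariant vector, use the resolvent $v^{(\eta)}_q=\sum_{n\ge 0}(1+\eta)^{-n}(A^n\mathbf 1)_q$ with $\eta\approx 1/l$. Convergence and the bound $v^{(\eta)}_q=O(\poly(l))$ follow from your Kraft estimate (indeed, since IL injectivity is global across input lengths, the cumulative masses $\sum_{n\le L}(A^n\mathbf 1)_q$ are themselves $O(L)$); one gets $v^{(\eta)}\ge\mathbf 1$ and $Av^{(\eta)}\le(1+\eta)v^{(\eta)}$ automatically. Betting $B(q)(a)=2^{-|\nu(q,a)|}v^{(\eta)}_{\delta(q,a)}/\bigl((1+\eta)v^{(\eta)}_q\bigr)$ (rationalized to relative accuracy $\eta$, slack dumped on a fixed symbol) then telescopes exactly as you describe, with per-symbol cost $O(\log(1+\eta))=O(1/l)$ matching the $m|w|/l$ budget and boundary cost $\log\bigl(v^{(\eta)}_{q_0}/\min_r v^{(\eta)}_r\bigr)=O(\log l)\le ml$, so the quantifier order ($m$ from $C$ alone, before $l$) is preserved. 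Note also that this whole route differs from the cited proof in \cite{Dai:FSD}, which works combinatorially: the gambler there processes $w$ in blocks of length $l$ and bets on each block proportionally to $2^{-|C_q(u)|}$ with a per-block normalization justified by the same counting bound $(ls+1)|Q|$, avoiding any spectral considerations; your global weighting is more algebraic but, as written, rests on the false Perron--Frobenius claim. Finally, a small fixable point: your injectivity argument needs $q$ reachable, so delete unreachable states (this changes neither $C(w)$ nor what the gambler must simulate) before forming $A$.
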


\begin{proof}[{\bf Proof of Lemma \ref{lm:4_1}}]
Assume the hypothesis.
Let
\[\delta_\beta =\min_{a\in \Sigma}\log \frac{1}{ \beta(a)},\]
noting the following two things.
\begin{enumerate}[(i)]
\item
$\delta_\beta>0$, because $\beta $ is positive.
\item
For all $w\in \Sigma^*$,
\begin{equation}\label{eq:a_2}
\I_\beta(w)\geq \delta_\beta|w|.
\end{equation}
\end{enumerate}
Choose $m$ for $C$ as in Lemma \ref{lm:a_1},
let
\begin{equation}\label{eq:a_3}
l =\ceil{\frac{3m}{\epsilon \delta \beta}},
\end{equation}
and choose $G$ for $C$, $m$, and $l$ as in
Lemma \ref{lm:4_1}. Let
\[\delta =\tfrac{2}{3}\epsilon \delta_\beta,\]
noting that $\delta>0$ and that
\begin{align*}
|w|\geq l^2\implies& \epsilon\delta_\beta|w|-m(\tfrac{|w|}{l}+l)\\
&=\epsilon\delta_\beta|w| -\tfrac{m}{l}(|w| +l^2)\\
&\geq \epsilon \delta_\beta|w| -\tfrac{2m}{l} |w|\\
&= (\epsilon \delta_\beta -\tfrac{2m}{l}) |w|\\
&\geq ^{\eqref{eq:a_3}}\tfrac{2}{3}\epsilon \delta_\beta |w|,
\end{align*}
i.e., that
\begin{equation}\label{eq:a_4}
|w|\geq l^2\implies \epsilon \delta_\beta|w| -m(\tfrac{|w|}{l} +l )\geq \delta|w|.
\end{equation}
It follows that, for all $w\in I$ with $|w|\geq l^2$,
we have
\begin{align*}
\log d_{G,\beta}^{(s)}(w) &= \log (\tfrac{\mu(w)}{\beta(w)^s}d_G^{(1)}(w))\\
&= -|w|\log k +s \I_\beta(w) +\log d_G^{(1)}(w)\\
&\geq ^{\eqref{eq:a_1}} s \I_\beta(w) -|C(w)| -m(\tfrac{|w|}{l} +l)\\
&\geq ^{\eqref{eq:4_1}}s\I_\beta(w)  -m(\tfrac{|w|}{l}+l)\\
&\geq ^{\eqref{eq:a_2}} \epsilon\delta_\beta|w| -m(\tfrac{|w|}{l}+l)\\
&\geq ^{\eqref{eq:a_4}} \delta|w|.
\end{align*}
Hence \eqref{eq:4_2} holds.
\end{proof}

An FSG $G=(Q, \Sigma, \delta, \beta, q_0)$ is {\em nonvanishing}
if all its bets are nonzero, i.e., if $\beta(q)(a)>0$
holds for all $q\in Q$ and $a\in \Sigma$.

\begin{lemma}[\cite{Dai:FSD}]\label{lm:a_2}
For each FSG $G$ and
each $\delta >0$, there is a nonvanishing FSG $G'$
such that, for all $w\in \Sigma^*$,
\begin{equation}\label{eq:a_5}
d_{G'}^{(1)}(w) \geq k^{-\delta|w|}d_G^{(1)}(w).
\end{equation}
\end{lemma}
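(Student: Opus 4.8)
The plan is to build $G'$ from $G$ by a standard smoothing argument: keep the same state set, initial state, and transition function, and only perturb the betting function so that every bet becomes strictly positive while the accumulated capital shrinks by at most a controlled exponential factor. Concretely, if $B\colon Q\to\Delta_\Q(\Sigma)$ is the betting function of $G$, I would choose a rational $\eta\in(0,1)$ and define the betting function of $G'$ by $B'(q)(a)=(1-\eta)B(q)(a)+\eta\mu(a)$ for all $q\in Q$ and $a\in\Sigma$, where $\mu(a)=\tfrac{1}{k}$. Since $B'(q)$ is a convex combination of two rational probability measures on $\Sigma$, it again lies in $\Delta_\Q(\Sigma)$, so $G'$ is a legitimate FSG; and because $B'(q)(a)\ge \eta\mu(a)=\eta/k>0$, the gambler $G'$ is nonvanishing.

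First I would record the one-step effect of this perturbation. From the recursion $d_{G'}^{(1)}(wa)=k\,d_{G'}^{(1)}(w)B'(\delta(w))(a)$ and the pointwise bound $B'(q)(a)\ge(1-\eta)B(q)(a)$, each symbol multiplies the capital by a factor that is at least $(1-\eta)$ times the corresponding factor for $G$, i.e.\ $k\,B'(\delta(w))(a)\ge(1-\eta)\,k\,B(\delta(w))(a)$. A routine induction on $|w|$, using $d_{G'}^{(1)}(\lambda)=d_G^{(1)}(\lambda)=1$, then yields
\[
d_{G'}^{(1)}(w)\ge (1-\eta)^{|w|}\,d_G^{(1)}(w)
\]
for all $w\in\Sigma^*$.

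It remains to match the exponent $k^{-\delta|w|}$ in \eqref{eq:a_5}. Since $\delta>0$ we have $k^{-\delta}<1$, so the interval $(0,\,1-k^{-\delta}]$ is nonempty and contains a rational number; I would fix $\eta$ to be such a rational, which forces $1-\eta\ge k^{-\delta}$ and hence $(1-\eta)^{|w|}\ge k^{-\delta|w|}$ for every $w$. Combining this with the displayed inductive bound gives $d_{G'}^{(1)}(w)\ge k^{-\delta|w|}d_G^{(1)}(w)$, which is exactly \eqref{eq:a_5}.

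There is essentially no deep obstacle here: the argument is a one-line mixing construction followed by an elementary induction. The only points requiring a little care are keeping the betting function rational (so that $G'$ meets the definition of an FSG, whose betting function must map into $\Delta_\Q(\Sigma)$) and choosing the mixing weight $\eta$ precisely enough that the per-step loss factor $(1-\eta)$ dominates $k^{-\delta}$, so that the promised bound holds for \emph{every} $w$ rather than only asymptotically. Both are handled by the choice of a rational $\eta$ in $(0,\,1-k^{-\delta}]$ above.
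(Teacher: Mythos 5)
Your proof is correct, and it is essentially the standard argument for this fact: the paper itself does not prove Lemma \ref{lm:a_2} (it cites it from \cite{Dai:FSD}), and the proof there is exactly this mixing construction, perturbing the betting function toward the uniform measure while keeping the transition structure fixed. Your bookkeeping is sound on all the points that matter — the same transition function makes $\delta(w)$ identical for $G$ and $G'$, the convex combination stays in $\Delta_\Q(\Sigma)$ and is bounded below by $\eta/k$, and the rational choice of $\eta\in(0,\,1-k^{-\delta}]$ turns the per-step loss $(1-\eta)$ into the required bound $k^{-\delta|w|}$ for all $w$.
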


The following lemma summarizes the
second part of the proof of Theorem \ref{th:2_7}.

\begin{lemma}[\cite{Dai:FSD}]\label{lm:a_3}
For each nonvanishing FSG $G$ and
each $l\in \Z^+$, there exists an ILFSC $C$ such
that, for all $w\in \Sigma^*$,
\begin{equation}\label{eq:a_6}
|C(w)|\leq (1+\tfrac{2}{l})|w|\log k -\log d_G^{(1)} (w).
\end{equation}
\end{lemma}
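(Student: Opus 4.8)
The plan is to convert the betting strategy of $G$ into a block code whose length tracks the gambler's self-information, using the fact that a \emph{nonvanishing} FSG assigns a genuine conditional probability to each next symbol. Writing $p_G(w)=\prod_{i=0}^{|w|-1} B(\delta(w[0..i-1]))(w[i])$ for the product of the bets made along $w$, the defining recursion $d_G^{(1)}(wa)=k\,d_G^{(1)}(w)B(\delta(w))(a)$ yields $\log d_G^{(1)}(w)=|w|\log k+\log p_G(w)$. Substituting this into the claimed bound shows that the target inequality is equivalent to
\[|C(w)|\le \log\frac{1}{p_G(w)}+\tfrac{2}{l}|w|\log k,\]
so it suffices to build an ILFSC that encodes $w$ in essentially the Shannon codelength $\log(1/p_G(w))$ of the gambler's predictive distribution, up to an overhead of $(2/l)\log k$ per symbol.

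First I would have $C$ read its input in consecutive blocks of length $l$, keeping in its (finite) state both the current $G$-state and the symbols of the block read so far. When $G$ is in state $q$ at a block boundary, the bets of $G$ assign to each candidate block $u\in\Sigma^l$ the probability $p_q(u)=\prod_{j=0}^{l-1} B(q_j)(u[j])$, where $q_0=q$ and $q_{j+1}=\delta(q_j,u[j])$; since $G$ is nonvanishing these are all positive and $\sum_{u\in\Sigma^l} p_q(u)=1$. For each state $q$ I would fix a Shannon--Fano prefix code $\phi_q:\Sigma^l\to\strings$ with $|\phi_q(u)|=\ceil{\log(1/p_q(u))}$, which exists by the Kraft inequality. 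The compressor emits $\phi_q(u)$ upon completing the block $u$ and advances its remembered $G$-state to $\delta(q,u)$.

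The length bookkeeping is then routine: for a block-aligned prefix $w=u_1\cdots u_b$ with $b=|w|/l$ and block $u_t$ read from state $q^{(t)}$, the chain rule $p_G(w)=\prod_t p_{q^{(t)}}(u_t)$ and the ceiling give
\[|C(w)|=\sum_{t=1}^{b}\ceil{\log\frac{1}{p_{q^{(t)}}(u_t)}}\le \log\frac{1}{p_G(w)}+b=\log\frac{1}{p_G(w)}+\frac{|w|}{l},\]
and $|w|/l\le (2/l)|w|\log k$ because $k\ge 2$ forces $\log k\ge 1$. For a non-block-aligned $w$ the trailing partial block is emitted as $0$ bits, its symbols being recoverable from the final state of $C$ (which stores that partial block); since $\log(1/p_G(\cdot))$ only grows on the unemitted suffix, the same bound survives.

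The step I expect to require the most care is verifying information-losslessness in tandem with the treatment of the incomplete final block, i.e.\ checking that $w\mapsto(C(w),\delta_C(w))$ is injective. This holds because each $\phi_q$ is a prefix code and the sequence of $G$-states visited is deterministic, so $C(w)$ uniquely decodes into the completed blocks while $\delta_C(w)$ records the leftover partial block. Keeping this accounting clean — so that the $+1$-per-block Shannon--Fano overhead is absorbed into the block length $l$ and the partial block costs nothing — is the crux; the slack between the $|w|/l$ I obtain and the allotted $(2/l)|w|\log k$ is precisely what covers any rounding in a careful implementation.
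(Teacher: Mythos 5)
Your proposal is correct, and since the paper proves this lemma only by citation to \cite{Dai:FSD} (it is presented as a summary of the second half of the proof of Theorem \ref{th:2_7}), the right comparison is with that reference, whose argument is essentially yours: block Shannon--Fano coding of the gambler's conditional distribution on length-$l$ blocks, with the compressor's finite state carrying both the gambler's state and the unfinished block, and the $+1$-per-block ceiling overhead plus the dangling partial block absorbed into the $(2/l)|w|\log k$ slack. The only detail worth adding is that nonvanishing together with $k\ge 2$ forces $p_q(u)<1$ for every block $u$, hence every codeword $\phi_q(u)$ is nonempty; this is what makes the prefix-code decoding of $C(w)$ terminate unambiguously, so that $w\mapsto(C(w),\delta_C(w))$ is indeed injective.
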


\begin{proof}[{\bf Proof of Lemma \ref{lm:4_2}}]
Assume the hypothesis. Let
\[\gamma =\log \frac{1}{\beta_\mathrm{max}},\]
where
\[\beta_\mathrm{max} =\max_{a\in \Sigma} \beta(a).\]
Note that $\gamma>0$ (because $\beta $ is positive)
and that, for all $w\in\Sigma^*$,
\begin{equation}\label{eq:a_7}
\I_\beta(w)\geq \gamma|w|.
\end{equation}
Let
\begin{equation}\label{eq:a_8}
\delta=\frac{\gamma\epsilon }{\log k}
\end{equation}
and choose $G'$ for $G$ and $\delta$ as
in Lemma \ref{lm:a_2}. Let
\begin{equation}\label{eq:a_9}
l =\ceil{\frac{2\log k}{\gamma \epsilon}},
\end{equation}
and choose $C$ for $G'$ and $l$ as in
Lemma \ref{lm:a_3}. Then, for all $w\in I$,
\begin{align*}
|C(w)|
&\leq ^{\eqref{eq:a_6}} (1+\tfrac{2}{l}) |w|\log k -\log d_{G'}^{(1)}(w)\\
&\leq ^{\eqref{eq:a_9}} |w|(\gamma\epsilon +\log k ) -\log d_{G'}^{(1)}(w)\\
&\leq ^\eqref{eq:a_5} |w|(\gamma\epsilon +\log k) -\log(k^{-\delta|w|} d_G^{(1)}(w))\\
&=|w|(\gamma\epsilon+\log k +\delta\log k) -\log d_G^{(1)}(w)\\
&= |w|(2\gamma\epsilon +\log k) -\log d_G^{(1)}(w)\\
&=|w|(2\gamma\epsilon +\log k) - \log \left(\frac{\beta(w)^{s-2\epsilon}}{\mu(w)}d_{G,\beta}^{(s-2\epsilon)}(w)\right)\\
&\leq^\eqref{eq:a_3} |w|(2\gamma\epsilon+\log k) -\log \left(\frac{\beta(w)^{s-2\epsilon}}{\mu(w)}\right)\\
&=|w|(2\gamma\epsilon +\log k) -\log (k^{|w|}\beta(w)^{s-2\epsilon})\\
&= 2\gamma\epsilon|w| -\log \beta(w)^{s-2\epsilon}\\
&= 2\gamma\epsilon|w| +(s-2\epsilon) \I_\beta(w)\\
&\leq^\eqref{eq:a_7} s\I_\beta(w).
\end{align*}
\end{proof}

\begin{proof}[{\bf Proof of Lemma \ref{lm:5_2}}]
As in the proof of Lemma \ref{lm:3_3},
the hypothesis implies that
\[\frac{\I_\mu(w)}{\I_\beta(w)}=\frac{1}{\CH_k(\alpha) +\D_k(\alpha||\beta)}+ o(1)\]
as $w\rightarrow S$. The present lemma follows from
this and Lemma \ref{lm:5_1}.
\end{proof}

\begin{proof}[{\bf Proof of Theorem \ref{th:5_3}}]
Assume the hypothesis, and let $l\in\Z^+$.
Let $\alpha^{(l)}$ be the restriction of the product
probability measure $\mu^\alpha$ to $\Sigma^l$, noting that
$\CH(\alpha^{(l)}) = l \CH(\alpha)$. We first show that
\begin{equation}\label{eq:5_3}
\limn \CH(\pi^{(l)}_{R,n}) =\CH(\alpha^{(l)}),
\end{equation}
where $\pi^{(l)}_{R,n}$ is the empirical probability
measure defined in section \ref{sse:2_5}. To see this,
let $\epsilon >0$. By the continuity of the entropy
function, there is a real number $\delta>0$
such that, for all probability measures
$\pi$ on $\Sigma^l$,
\[\max_{w\in\Sigma^l} |\pi(w) -\alpha^{(l)}(w)|< \delta \implies |\CH(\pi) - \CH(\alpha^{(l)})|<\epsilon.\]
Since $R$ is  $\alpha$-normal, there is, for each $w\in\Sigma^l$,
a positive integer $n_w$ such that, for all $n\geq n_w$,
\[|\pi^{(l)}_{R,n}(w) -\alpha^{(l)}(w)| = |\pi^{(l)}_{R,n}(w) -\mu^\alpha (w) |< \delta.\]
Let $N =\max_{w\in\Sigma^l} n_w$. Then, for all $n\geq N$,
we have $|\CH(\pi^{(l)}_{R,n}) -\CH(\alpha^{(l)})|<\epsilon$, confirming
\eqref{eq:5_3}.

By Theorem \ref{th:2_5}, we now have
\begin{align*}
\dimfs(R)&=\Dimfs(R)\\
&=\inf_{l\in\Z^+}\frac{1}{l\log k} \limn \CH(\pi^{(l)}_{R,n})\\
&=\inf_{l\in\Z^+}\frac{1}{l\log k} \CH(\alpha^{(l)})\\
&=\frac{\CH(\alpha)}{\log k}\\
&=\CH_k(\alpha).
\end{align*}
\end{proof} 
\end{document}